\newcommand{\bs}{\boldsymbol}
\newtheorem{theorem}{Theorem}
\newtheorem{remark}{Remark}
\newcommand{\Rmnum}[1]{\expandafter\@slowromancap\romannumeral #1@}
\begin{document}
\title{Cooperative HARQ Assisted NOMA Scheme in Large-scale D2D Networks}
\author{
Zheng~Shi,
Shaodan~Ma,
Hesham~ElSawy,
Guanghua~Yang, \\
and Mohamed-Slim Alouini
\thanks{Zheng Shi and Shaodan Ma are with the Department of Electrical and Computer Engineering, University of Macau, Macao (e-mails:shizheng0124@gmail.com, shaodanma@umac.mo).}
\thanks{Hesham ElSawy and Mohamed-Slim Alouini are with the Computer, Electrical, and Mathematical Science and Engineering (CEMSE) Division, King Abdullah University of Science and Technology (KAUST) Thuwal, Makkah Province, Saudi Arabia (e-mail:hesham.elsawy@kaust.edu.sa, slim.alouini@kaust.edu.sa).}
\thanks{Guanghua Yang is with the Institute of Physical Internet, Jinan University (Zhuhai Campus), Zhuhai, China (e-mail:ghyang@jnu.edu.cn).}
}
\maketitle
\begin{abstract}

This paper develops an interference aware design for cooperative hybrid automatic repeat request (HARQ) assisted non-orthogonal multiple access (NOMA) scheme for large-scale device-to-device (D2D) networks. Specifically, interference aware rate selection and power allocation are considered to maximize long term average throughput (LTAT) and area spectral efficiency (ASE). The design framework is based on stochastic geometry that jointly accounts for the spatial interference correlation at the NOMA receivers as well as the temporal interference correlation across HARQ transmissions. It is found that ignoring the effect of the aggregate interference, or overlooking the spatial and temporal correlation in interference, highly overestimates the NOMA performance and produces misleading design insights.  An interference oblivious selection for the power and/or transmission rates leads to violating the network outage constraints. To this end, the results demonstrate the effectiveness of NOMA transmission and manifest the importance of the cooperative HARQ to combat the negative effect of the network aggregate interference. For instance, comparing to the non-cooperative HARQ assisted NOMA, the proposed scheme can yield an outage probability reduction by $32$\%. Furthermore, an interference aware optimal design that maximizes the LTAT given outage constraints leads to $47$\% throughput improvement over HARQ-assisted orthogonal multiple access (OMA) scheme.

\end{abstract}
\begin{IEEEkeywords}
Device-to-device communications, non-orthogonal multiple access, hybrid automatic repeat request, cooperative communications, stochastic geometry.
\end{IEEEkeywords}
\IEEEpeerreviewmaketitle
\hyphenation{HARQ}
\section{Introduction}\label{sec:int}
\subsection{Motivation and literature review}
\IEEEPARstart{T}{he} fifth generation of cellular networks are not only envisioned to enhance the mobile broadband services, but also to support massive number of connections within the Internet-of-Things (IoT) paradigm as well as to provide ultra-reliable low-latency communications for some services\cite{yang2017rapro,gao2015enhanced}. Such new requirements impose unprecedented challenges that cannot be fulfilled via the conventional orthogonal multiple-access (OMA) with centralized base station controlled communications. Instead, the 3GPP considers more aggressive spectral utilization schemes such as device-to-device (D2D) communication~\cite{tehrani2014device,asadi2014survey} and non-orthogonal multiple access (NOMA)~\cite{dai2015non,ding2015application} to support such massive number of connections. Despite the increased interference level imposed by D2D communications, it has been shown that D2D can significantly improve the overall network spatial spectral utilization~\cite{lin_D2D,elsawy1,model2016ali,afshang2015fundamentals}. Thanks to the low-power short range direct proximity transmissions enabled by D2D communication. The NOMA further improves the spectrum utilization by simultaneous transmission from the same source to multiple devices on the same time-frequency resources~\cite{islam2016power,dai2015non}. Specifically, NOMA leverages superposition coding (SC) along with successive interference cancellation (SIC) and multi-user diversity to efficiently enhance spectrum utilization. By allocating more transmission power to the user with poorer channel condition, NOMA can achieve a balanced tradeoff between system throughput and user fairness~\cite{cui2016novel,islam2016power,yang2016general,timotheou2015fairness}.

The foreseen gains of NOMA transmission have triggered several research efforts to optimize its operation. For instance, different power allocation strategies for NOMA transmission are developed in~\cite{cui2016novel, yang2016general, islam2016power}. The work in \cite{ding2016impact} investigates the effect of user pairing on the NOMA sum rate performance. Sub-optimal joint power allocation and user pairing strategy is advocated in~\cite{hina2}. For MIMO networks, the authors in \cite{hanif2016minorization} develop an optimized downlink  procedure to maximize NOMA sum rate under per-user rate constraint. The fairness of NOMA transmission is investigated in~\cite{timotheou2015fairness}. Improving NOMA transmission reliability  via cooperation is studied in \cite{ding2015cooperative} and via hybrid automatic repeat request (HARQ) is studied~\cite{choi2016harq, li2015investigation}. The potential gains from integrating NOMA with D2D communication has been investigated in \cite{zhang2016full}. However, none of \cite{ding2015cooperative, hina2, hanif2016minorization, cui2016novel, yang2016general, islam2016power, ding2016impact, timotheou2015fairness, choi2016harq, li2015investigation, zhang2016full} account for the network aggregate interference, which is significant in current cellular networks specially with D2D communication. Note that such co-channel interference affects the power allocation and rate adaptation, which are very crucial for NOMA transmission. The operation of NOMA under aggregate network interference in uplink cellular networks is studied in \cite{hina1}. However, the model in \cite{hina1} neither accounts for HARQ nor for cooperation, which are fundamentals for reliable NOMA communication. 
\subsection{Contribution }
To the best of the authors' knowledge, this paper is the first to study cooperative HARQ-assisted NOMA in large-scale D2D networks. Using stochastic geometry~\cite{haenggi2012stochastic, tut_h}, we develop a novel mathematical paradigm for cooperative HARQ-assisted NOMA that accounts for the spatial interference correlations among the NOMA receivers as well as the temporal interference correlation across the HARQ transmissions\footnote{Considering the spatial and temporal interference correlations highly complicate the analysis and lead to involved performance expressions. However, it is mandatory to reveal the true network performance and alleviate misleading design insights as shown in \cite{Haenggi_corr1, Haenggi_corr2, Haenggi_corr3, Sawy_corr1, Sawy_corr2, Haenggi_corr4, Haenggi_corr5, corr3, crismani2015cooperative, tanbourgi2014effect} and will be shown in this paper.}. Specifically, we consider a single source two users NOMA scheme and model the interfering D2D devices via a Poisson point process (PPP) (cf. Fig.~\ref{fig:sys_mod}), which is widely accepted for modeling D2D devices~\cite{tut_h,lin_D2D,elsawy1,model2016ali,afshang2015fundamentals}. Exact expressions for the outage probability and long term average throughput (LTAT) are calculated for the two users. Furthermore, simplified approximation for LTAT are proposed and validated via simulations. The approximate expressions are utilized to develop an interference aware rate selection and power allocation for cooperative HARQ-assisted NOMA that maximize different network objectives such as LTAT and area spectral efficiency (ASE). The results show the significance of interference spatial and temporal correlation on the NOMA performance. Further, the gains of NOMA over conventional orthogonal multiple access as well as the gains due to cooperation and HARQ are quantified. The contributions of the paper can be summaried in the following points:
\begin{itemize}
\item The paper develops a novel mathematical model based on stochastic geometry for HARQ assisted cooperative NOMA transmission. The developed mathematical model involves exact as well as accurate approximate expressions for the LTAT and outage probability under spatial and temporal interference correlation. The approximations are advocated to alleviate the computational complexity of the exact expressions and enable optimal network design.
\item The developed mathematical model captures the interwoven decoding performance among the two NOMA receivers due to the spatial interference correlation.
\item The developed mathematical model captures the temporal diversity loss in the HARQ retransmissions due to temporal interference correlation.
\item The numerical results quantify the gain of HARQ as well as the gain of cooperation on the NOMA performance in terms of outage probability and LTAT. The effect of the number of HARQ retransmission is also discussed.
\item The developed mathematical model is utilized to formulate an interference aware design that maximizes different network objectives, such as LTAT and ASE, under outage probability constraints.
\item The results show that an interference-oblivious or a correlation-oblivious design is unable to provide satisfied requirement of outage probabilities.
\item The results show the superiority of the proposed   HARQ assisted cooperative NOMA over the conventional OMA scheme.
\end{itemize}

\subsection{Notation and organization}

Throughout the paper, $\mathbb P[X]$ denotes the probability of an event $X$, $\mathbb E$ refers to the expectation operator, $X \cup Y$ and $X  \cap  Y$ denote the union and the intersection of events $X$ and $Y$, respectively, $[\cdot]^+$ denotes the projection onto the nonnegative orthant, $\left\| \cdot \right\|$ stands for Euclidean norm operation, $\Omega$ denotes the sample space and $\emptyset$ denotes the empty set.

The remainder of this paper is organized as follows. Section \ref{sec:sys_mod} presents the considered cooperative HARQ assisted NOMA scheme for D2D networks along with the underlying assumption. Section \ref{sec:per_ana} then analyzes the performance of the proposed scheme, particularly the LTAT and outage probability. Numerical results are presented for verification and discussion in Section \ref{sec:num}. Finally, Section \ref{sec:cond} concludes this paper.

\section{System Model}\label{sec:sys_mod}

This paper considers a D2D communication network modeled as a homogeneous PPP $\Phi \in \mathbb{R}^2$ with intensity $\lambda$. All D2D devices have backlogged buffers and are always transmitting over a shared frequency channel, which is dedicated to D2D communication. Without loss of generality, we focus on a typical source D2D device that is serving two nearby users, as shown in Fig. \ref{fig:sys_mod}, via cooperative HARQ assisted NOMA scheme shown in Fig. \ref{fig:TX_mod}. Let $z$ be the location of the source device, then the distance between the source device and user $i$ (the user at $o_i$) is denoted by $d_i = \left\| z - o_i \right\|$, where $i \in \{1,2\}$. Exploiting the stationarity of the PPP, we assume that one of the users is located at $o_1=(0,0)$ and the other user is located at $o_2=(D,0)$. Since NOMA protocol takes advantage of the difference between fading channels compared to time division multiple access (TDMA)\cite{ding2015cooperative}, we stipulate that user 1 is closer to the source device than user 2, that is, $d_1 < d_2$, without loss of generality.

\begin{figure}
 \centering
  \includegraphics[width=2.7in]{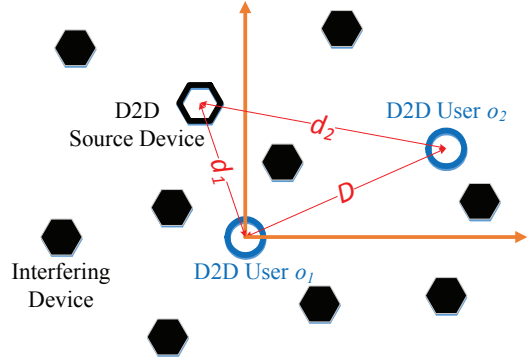}
  \caption{A NOMA-based D2D network model.}\label{fig:sys_mod}
 \end{figure}

As shown in Fig.~\ref{fig:TX_mod}, the cooperative HARQ assisted NOMA transmission is divided into two phases, namely Phase I and Phase II. In Phase I, the source utilizes superposition coding with power domain multiplexing to encode the two signals ${\bf s}_1$ and ${\bf s}_2$ that are intended to the two users 1 and 2, respectively. The nearer user at $o_1$ first decodes ${\bf s}_2$, which is then subtracted via SIC to decode its intended signal ${\bf s}_1$. The farther user 2 directly decodes ${\bf s}_2$ while treating the interfering signal ${\bf s}_1$ as noise, which is denoted hereafter as NOMA interference. The transmission of the superposition messages is repeated until either user 1 or 2 acknowledges successful reception or the maximum number of retransmission $K$ is reached. If either of the two devices acknowledges successful reception, Phase II starts in which the source node only transmits the remaining (i.e., not acknowledged) signal. Furthermore, if user 1 was the  acknowledging receiver, it cooperates with the source and relays ${\bf s}_2$ to user 2. When both users 1 and 2 acknowledge successful reception, the next two signals in the source queue are transmitted via the same aforementioned operation. If the maximum number of transmission $K$ is reached without decoding the intended signals, the signals are dropped from the queue and outage event is declared. For simplicity, we assume that the feedback channel is error-free and delay-free, which can be justified by the low transmission rate and the short length of acknowledgement message.

In this paper, we assume a block Rayleigh fading channel (i.e., channel coefficient remains constant during each HARQ transmission) with known statistical CSI at the source device. However, the channel gain randomly and independently changes from one transmission to another. However, it is important to note that the locations of the interfering devices do not change dramatically over the short HARQ time interval, especially for interferers with low-to-medium mobility. Thus it is reasonable to assume that the interferer locations are fixed during HARQ transmissions, i.e., follow stationary interferer model (SIM) \cite{crismani2015cooperative}, which is valid because of the limited maximal allowable number of transmissions for HARQ in practice, e.g., the maximal number of transmissions is usually chosen up to $5$ and each HARQ round consumes around $8$ms \cite{erceg2001ieee}. The received signal at each of the devices in each transmission phase can be represented as follows


\begin{figure}
 \centering
  \includegraphics[width=5.5in]{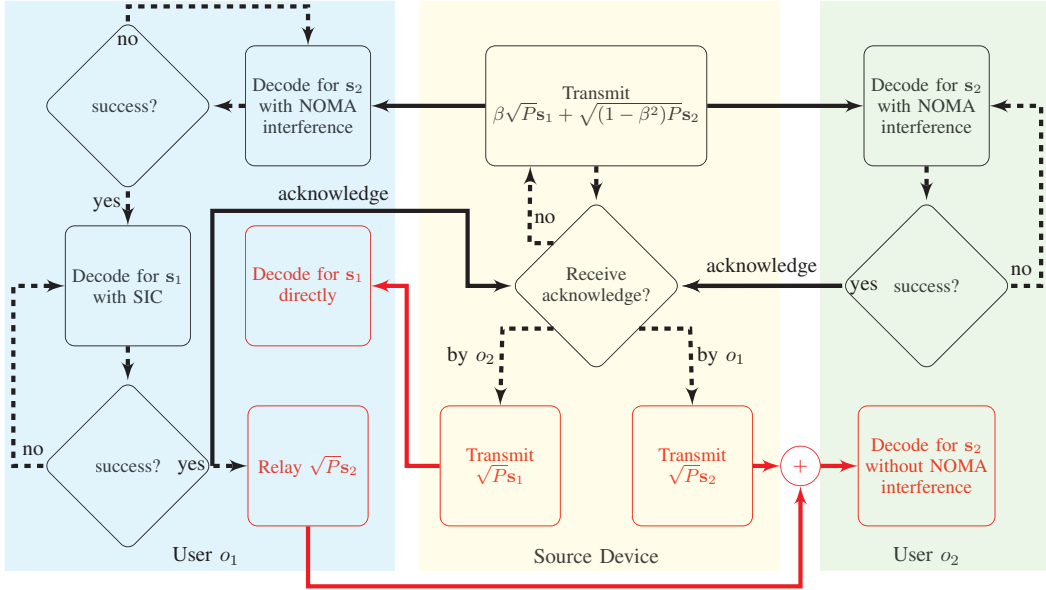}
  \caption{Cooperative HARQ assisted NOMA transmission model for each superposition encoded signal of ${\bf s}_1$ and ${\bf s}_2$, where black color denotes Phase I modes, red color denotes Phase II modes, dotted arrows denote logical state transitions, and solid arrows denote physical transmissions.}\label{fig:TX_mod}
 \end{figure}

\subsubsection{Phase I} The signal received by user $i$ in the $k$-th HARQ round is written as
\begin{align}\label{eqn:signal_received1}
{{\bf{y}}_{i,k}} &= \sqrt {\ell \left( {{d_i}} \right)P} {h_{z{o_i},k}}\left( {{\beta}{{\bf{s}}_1} + \sqrt {1 - {\beta}^2} {{\bf{s}}_2}} \right)  + {\sum \nolimits_{x \in \Phi \backslash \left\{ z \right\}}}\sqrt {\ell \left( {\left\| {x - {o_i}} \right\|} \right)P} {h_{x{o_i},k}}{{\bf{s}}_{x,k}} + {{\bf{n}}_{i,k}},
\end{align}
where $P$ denotes the transmit power and $\beta$ represents the power allocation coefficient. ${\bf s}_i$ is a Gaussian signal with unit variance to user $i$. Each signal of ${{\bf{s}}_1}$ and ${{\bf{s}}_2}$ is encoded independently at the source device and is transmitted with an initial transmission rate $R_i$ for user $i$. ${{\bf{s}}_{x,k}}$ denotes a Gaussian signal with unit variance and delivered by an interfering device located at $x$ in the $k$-th HARQ round. $\ell(d) = d^{-\alpha}$ captures the path loss effect with path loss exponent $\alpha > 2$. The notation ${\Phi }\backslash\left\{ z \right\}$ denotes the set of all devices excluding the source device $z$. ${{\bf{n}}_{i,k}}$ denotes a complex additive white Gaussian noise (AWGN) with zero mean and variance of $\sigma^2$, i.e., ${{\bf{n}}_{i,k}} \sim {\cal CN}(0,\sigma^2)$. ${h_{xo_i,k}}$ denotes the channel coefficient from the interfering device $x$ to user $i$ in the $k$-th HARQ round, and ${h_{zo_i,k}}$ and ${h_{xo_i,k}}$ are complex Gaussian distributed with zero mean and unit variance, i.e., ${h_{zo_i,k}},{h_{xo_i,k}} \sim {\cal CN}(0,1)$.

Following the NOMA protocol, after receiving the signal, the message intended for user 2 is decoded first with SINR
\begin{equation}\label{eqn:sinr_1_2}
{\gamma _{o_i,k,2}^{(I)}} = \frac{{\left( {1 - {\beta}^2} \right)P{{\left| {{h_{z{o_i},k}}} \right|}^2}\ell \left( {{d_i}} \right)}}{{{\beta}^2P{{\left| {{h_{z{o_i},k}}} \right|}^2}\ell \left( {{d_i}} \right) + {I_{i,k}} + {\sigma ^2}}},
\end{equation}
where ${I_{i,k}} $ denotes the total interference at user $i$ from interfering devices $x \in \Phi$ excluding $z$, i.e., $\Phi \backslash \left\{ z \right\}$. More precisely, it follows from (\ref{eqn:signal_received1}) that
\begin{equation}\label{eqn:inter_fer_1}
{I_{i,k}} = P\sum\nolimits_{x \in \Phi \backslash \left\{ z \right\}} {\ell \left( {\left\| {x - {o_i}} \right\|} \right){{\left| {{h_{xo_i,k}}} \right|}^2}}.
\end{equation}
If user 1 successfully decodes the message of user 2, SIC will be carried out to recover its own message ${\bf s}_{1}$ through subtracting the decoded signal ${\bf s}_{2}$ with SINR
\begin{equation}\label{eqn:sinr_1_1}
{\gamma _{o_1,k,1}^{(I)}} = \frac{{{\beta}^2P{{\left| {{h_{z{o_1},k}}} \right|}^2}\ell \left( {{d_1}} \right)}}{{{I_{1,k}} + {\sigma ^2}}}.
\end{equation}
\subsubsection{Phase II}
As shown in Fig.~\ref{fig:TX_mod}, if $s_2$ is successfully decoded prior to the $k$-th HARQ round, the received signal at $o_1$ in the $k$-th HARQ round is therefore given by
\begin{align}\label{eqn:y1_second_phase}
{{\bf y}_{1,k}} &= \sqrt {\ell \left( {{d_1}} \right)P} {h_{z{o_1},k}}{{\bf{s}}_1} + {\sum \nolimits_{x \in \Phi \backslash \left\{ z \right\}}}\sqrt {\ell \left( {\left\| {x - {o_i}} \right\|} \right)P} {h_{x{o_1},k}}{{\bf{s}}_{x,k}}
+ {{\bf{n}}_{1,k}}.
\end{align}
Correspondingly, the received SINR can thus be expressed as
\begin{equation}\label{eqn:sinr_phase2_o1}
\gamma _{{o_1},k,1}^{\left( \Rmnum{2} \right)} = \frac{{P{{\left| {{h_{z{o_1},k}}} \right|}^2}\ell \left( {{d_1}} \right)}}{{{I_{1,k}} + {\sigma ^2}}}.
\end{equation}

\noindent Conversely, if $s_1$ is successfully decoded prior to the $k$-th HARQ round, the received signal at user 2 with cooperation from user 1 in the $k$-th HARQ round is therefore given by
\begin{equation}\label{eqn:phaseII_o2}
{{\bf y}_{2,k}} = \sqrt {\ell \left( D \right)P} {h_{{o_1}{o_2},k}}{{\bf{s}}_2}
 +\sqrt {\ell \left( {{d_2}} \right)P} {h_{z{o_2},k}}{{\bf{s}}_2}
  + {\sum \nolimits_{x \in \Phi \backslash \left\{ z \right\}}}\sqrt {\ell \left( {\left\| {x - {o_2}} \right\|} \right)P} {h_{x{o_2},k}}{{\bf{s}}_{x,k}} + {{\bf{n}}_{2,k}},
\end{equation}
where ${h_{{o_1}{o_2},k}}$ denotes the channel coefficient between two users in the $k$-th transmission. Similar to (\ref{eqn:sinr_phase2_o1}), the received SINR of user 2 can be written as
\begin{align}\label{eqn:SINR_second_phase}
\gamma _{{o_2},k,2}^{\left(
\Rmnum{2} \right)} 
=\frac{{P{{\left| {\sqrt {\ell \left( D \right)} {h_{{o_1}{o_2},k}} + \sqrt {\ell \left( {{d_2}} \right)} {h_{z{o_2},k}}} \right|}^2}}}{{{I_{2,k}} + {\sigma ^2}}}
=\frac{{P{{\left| {{h_{eq,k}}} \right|}^2}\left( {\ell \left( D \right) + \ell \left( {{d_2}} \right)} \right)}}{{{I_{2,k}} + {\sigma ^2}}},
\end{align}
where ${h_{eq,k}} \triangleq \sqrt {\frac{{\ell \left( D \right)}}{{\ell \left( D \right) + \ell \left( {{d_2}} \right)}}} {h_{{o_1}{o_2},k}} + \sqrt {\frac{{\ell \left( {{d_2}} \right)}}{{\ell \left( D \right) + \ell \left( {{d_2}} \right)}}} {h_{z{o_2},k}}$ denotes the equivalent channel coefficient in the $k$-th transmission, and follows a complex Gaussian distribution with zero mean and unit variance, i.e., ${h_{eq,k}} \sim {\cal CN}(0,1)$, and ${\ell \left( D \right) + \ell \left( {{d_2}} \right)}$ is the equivalent path loss.


It is important to note that the signals, and hence the SINRs, are spatially and temporally correlated. For instance, in Phase I, the spatial correlation exists between \eqref{eqn:signal_received1} for $i=1$ and $i=2$ and the temporal correlation exists in \eqref{eqn:signal_received1} across different $k$. In Phase II, only temporal correlations across different $k$ in the signals \eqref{eqn:y1_second_phase} and \eqref{eqn:phaseII_o2} exist since either user $1$ or $2$ is receiving. Similar notion applies to the SINRs given in  \eqref{eqn:sinr_1_2}, \eqref{eqn:sinr_phase2_o1}, and \eqref{eqn:SINR_second_phase}. The spatial and temporal correlations stem from the fact that the two users see common interfering sources across the HARQ rounds. Consequently, the decoding performed at the two users is interwoven due to the spatial correlations as well as the NOMA protocol and cooperative communications. As will be shown in the next section, such interdependence between the performance of the two users makes the analysis significantly involved.

\section{Analyses of throughput and Outage Probability}\label{sec:per_ana}
LTAT is a widely adopted performance metric to characterize the performance of HARQ system. Here we adopt the model developed in \cite{caire2001throughput} to calculate the LTAT of the NOMA transmission with HARQ in the limit for large subcodeword length $L$. For notational convenience, let $t$ denote the number of slots and $b_{o_i}(t)$ be the number of information bits, which are intended for user $i$ and successfully decoded by user $i$, up to slot $t$. The total LTAT $\eta$ measured in bps/Hz is defined as
\begin{equation}\label{eqn:throughput_def}
 \eta  = \mathop {\lim }\limits_{t \to \infty } \frac{{{b_{{o_1}}}\left( t \right) + {b_{{o_2}}}\left( t \right)}}{{tL}} = \mathop {\lim }\limits_{t \to \infty } \frac{{{R_{{o_1}}}\left( t \right) + {R_{{o_2}}}\left( t \right)}}{t},
\end{equation}
where $R_{o_i}(t) \triangleq b_{o_i}(t)/L$ denotes the corresponding information bits per second per hertz successfully decoded by user $i$. The event that user $i$ stops the transmission of the current message is treated as a recurrent event \cite{zorzi1996use}. The recurrent event occurs with two random rewards $\mathcal R_{o_1}$ and $\mathcal R_{o_2}$ gained by the two users at $o_1$ and $o_2$, respectively. Thus by using renewal-reward theorem, the LTAT of the cooperative HARQ assisted NOMA system is given by
\begin{equation}\label{eqn:thr_rew}
\eta  = \frac{{\mathbb E\left( \mathcal R_{o_1} \right)+\mathbb E\left( \mathcal R_{o_2} \right)}}{{\mathbb E\left( \mathcal T \right)}}, \quad \rm with~ probability~ 1,
\end{equation}
where $\mathcal T$ is the random number of transmissions between two consecutive occurrences of the recurrent event (inter-renewal time). Note that $\mathcal R_{o_i}=R_i$ bps/Hz if user $i$ successfully recovers its own message, otherwise $\mathcal R_{o_i}=0$ bps/Hz, we have
\begin{align}\label{eqn:average_reward}
\mathbb E\left( {{{\cal R}_{{o_i}}}} \right) 
&=R_{i}(1-{\mathcal{O}_{K,o_i}}),
\end{align}
where ${{\mathcal{O}_{K,o_i}}}$ denotes the outage probability of user $i$ after $K$ HARQ rounds. Moreover, $\mathcal T$ is a discrete random variable with the sample space $\{1,2,\cdots,K\}$ and obeys the probability distribution as
\begin{align}\label{eqn:rand_T_dis}
\mathbb{P} \left[ {\mathcal T = \kappa } \right] =
\left\{ {\begin{array}{*{20}{l}}\begin{array}{l}
\mathcal{O}_{\kappa-1,o_1|o_2} - \mathcal{O}_{\kappa,o_1|o_2},
\end{array}&{\kappa  < K}\\
{\mathcal{O}_{K-1,o_1|o_2},}&{\kappa  = K}
\end{array}} \right.,
\end{align}
where $\mathcal{O}_{\kappa,o_1|o_2}$ denotes the outage event occurring at either user 1 or user 2 after $\kappa$ transmissions. By using inclusion-exclusion identity, it follows that
\begin{equation}\label{eqn:out_or_inexid}
\mathcal{O}_{\kappa,o_1|o_2} = { {{\mathcal{O}_{\kappa ,{o_1}}} + {\mathcal{O}_{\kappa ,{o_2}}} - {\mathcal{O}_{\kappa ,{o_1},{o_2}}}} },
\end{equation}
where ${{\mathcal{O}_{\kappa,o_1,o_2}}}$ represents the probability that both two users fail to decode their own messages after $\kappa$ HARQ rounds. As such, the average number of transmissions $\mathbb E(\mathcal T)$ is obtained by using (\ref{eqn:rand_T_dis}) and (\ref{eqn:out_or_inexid}) as
\begin{equation}\label{eqn:average_transmissions}
\mathbb E\left( \mathcal T \right) = \sum\limits_{\kappa  = 1}^K {\kappa \mathbb{P} \left[ {\mathcal T = \kappa } \right]}
 = 1 + \sum\limits_{\kappa  = 1}^{K-1} { \left( {{\mathcal{O}_{\kappa ,{o_1}}} + {\mathcal{O}_{\kappa ,{o_2}}} - {\mathcal{O}_{\kappa ,{o_1},{o_2}}}} \right)},
\end{equation}

Accordingly, substituting (\ref{eqn:average_reward}) and (\ref{eqn:average_transmissions})
into (\ref{eqn:thr_rew}) leads to
\begin{equation}\label{eqn:noma_harq_throughput}
\eta = \frac{{{R_1}\left( {1 - {\mathcal{O}_{K,o_1}}} \right) + {R_2}\left( {1 - {\mathcal{O}_{K,o_2}}} \right)}}{{1+\sum\nolimits_{\kappa = 1}^{K - 1} {\left( {{\mathcal{O}_{\kappa,o_1}} + {\mathcal{O}_{\kappa,o_2}} - {\mathcal{O}_{\kappa,o_1,o_2}}} \right)} }}.
\end{equation}
Thus the LTAT is expressed as a function of outage probabilities, which are the fundamental performance metrics. It is worth noting that (\ref{eqn:noma_harq_throughput}) is a general expression to evaluate the LTAT of HARQ assisted NOMA system, which is applicable to both cooperative and non-cooperative cases. Following the same analytical approach, it can be readily extended to derive the LTAT of HARQ assisted NOMA system with two more users. To avoid tedious mathematical derivations, we skip the detailed discussion. To proceed with our analysis, the outage probabilities ${{\mathcal{O}_{K,o_1}}}$, ${{\mathcal{O}_{K,o_2}}}$ and ${{\mathcal{O}_{K,o_1,o_2}}}$ are individually derived as follows.
\subsection{The outage event ${{\mathcal{O}_{K,o_1}}}$}
According to the system model in Section \ref{sec:sys_mod}, the decoding performance of ${\bf s}_1$ depends on the number of HARQ rounds consumed by user 1 to successfully decode and subtract ${\bf s}_2$ as well as the number of HARQ rounds consumed by user 2 to decode ${\bf s}_2$. This is because the source transmission power is totally allocated to ${\bf s}_1$ after user 2 acknowledges successful decoding.  To facilitate our analysis, we define the following events.
\begin{description}
  \item[${{\Theta _{{o_1},i,l}}}$]: The event that user 1 successfully decodes the signal ${\bf s}_i$ after $l$ HARQ rounds;
  \item[$ {{\bar \Theta _{{o_1},i}}}$]: The complement of the union ${\bigcup\limits_{l = 1}^K {{\Theta _{{o_1},i,l}} } }$, that is, user 1 fails to decode the signal ${\bf s}_i$ after $K$ HARQ rounds;
  \item[${\Theta _{{o_2,k}}}$]: The event that user 2 succeeds in decoding its own message after $k$ HARQ rounds;
  \item[${\bar \Theta _{{o_2}}}$]: The complement of the union ${\bigcup\limits_{k = 1}^K {{\Theta _{{o_2},k}} } }$, that is, user 2 fails to recover its own message after $K$ HARQ rounds.
\end{description}
With the above definitions, the outage probability of user 1, i.e., ${\mathcal{O}_{K,{o_1}}}$, can be obtained by using law of total probability as
\begin{align}\label{eqn:p_K_o1_tot}
{\mathcal{O}_{K,{o_1}}} &= \mathbb P\left[ {{{\bar \Theta }_{{o_1},1}}} \right] = \mathbb P\left[ {{{\bar \Theta }_{{o_1},1}}},\Omega,\Omega \right] \notag \\
 &= \mathbb P\left[ {{{\bar \Theta }_{{o_1},1}},\left( {\bigcup\limits_{l = 1}^K {{\Theta _{{o_1},2,l}} } } \right)\bigcup {{{\bar \Theta }_{{o_1},2}}} ,\left( {\bigcup\limits_{k = 1}^K {{\Theta _{{o_2},k}} } } \right)\bigcup {{{\bar \Theta }_{{o_2}}}} } \right].
\end{align}
Notice that ${{\Theta _{{o_1},2,1}},\cdots,{\Theta _{{o_1},2,K}}}$ and ${{{\bar \Theta }_{{o_1},2}}}$ are mutually exclusive events. That is, the intersection of any sequence of these events is empty. Similarly, ${{\Theta _{{o_2},1}} ,\cdots,{\Theta _{{o_2},K}}}$ and ${{{\bar \Theta }_{{o_2}}}}$ are also mutually exclusive. In addition, ${\Theta _{{o_1},2,l}}$ and ${\Theta _{{o_2,k}}} $ are mutually exclusive if $l > k$, since the source only sends ${\bf s}_1$ after the acknowledgement of user 2, and hence, SIC in not required.  Therefore, (\ref{eqn:p_K_o1_tot}) can be simplified as
\begin{align}\label{eqn:out_K_o1}
{\mathcal{O}_{K,{o_1}}} &= \sum\limits_{k = 1}^K {\sum\limits_{l = 1}^k {\mathbb P\left[ {{{\bar \Theta }_{{o_1},1}},{\Theta _{{o_1},2,l}},{\Theta _{{o_2},k}}} \right]} } + \sum\limits_{k = 1}^K {\mathbb P\left[ {{{\bar \Theta }_{{o_1},1}},{{\bar \Theta }_{{o_1},2}},{\Theta _{{o_2},k}} } \right]} \notag\\
 &\quad + \sum\limits_{l = 1}^K {\mathbb P\left[ {{{\bar \Theta }_{{o_1},1}},{\Theta _{{o_1},2,l}},{{\bar \Theta }_{{o_2}}}} \right]} + \mathbb P \left[ {{{\bar \Theta }_{{o_1},1}},{{\bar \Theta }_{{o_1},2}},{{\bar \Theta }_{{o_2}}}} \right].
\end{align}
The terms at the right hand side of (\ref{eqn:out_K_o1}) will be derived one by one as follows.
\subsubsection{${\mathbb P\left[ {{{\bar \Theta }_{{o_1},1}},{\Theta _{{o_1},2,l}} ,{\Theta _{{o_2},k}}} \right]}$}
From information-theoretical perspective, an outage event happens when the mutual information is less than the transmission rate. Herein, ${\mathbb P\left[ {{{\bar \Theta }_{{o_1},1}},{\Theta _{{o_1},2,l}} ,{\Theta _{{o_2},k}}} \right]}$ represents the outage probability of user 1 after SIC given that decoding ${\bf s}_2$ by user 1 consumed  $l$ HARQ rounds and decoding ${\bf s}_2$ by user 2 consumed $k$ HARQ rounds. Note that $l \le k$ should be satisfied in this case. With the signal model in Section \ref{sec:sys_mod}, ${\mathbb P\left[ {{{\bar \Theta }_{{o_1},1}},{\Theta _{{o_1},2,l}} ,{\Theta _{{o_2},k}}} \right]}$ can be obtained as
\begin{equation}\label{eqn:outage_o1_1}
{\mathbb P\left[ {{{\bar \Theta }_{{o_1},1}},{\Theta _{{o_1},2,l}} ,{\Theta _{{o_2},k}}} \right]} =
\mathbb P\left[ {\begin{array}{*{20}{c}}
{\bigcap\limits_{j = l}^k {\mathcal I\left( {\gamma _{{o_1},j,1}^{(I)}} \right) < {R_1}} ,\bigcap\limits_{j = k + 1}^K {\mathcal I\left( {\gamma _{{o_1},j,1}^{\left( {II} \right)}} \right) < {R_1}} ,\bigcap\limits_{j = 1}^{l - 1} {\mathcal I\left( {\gamma _{{o_1},j,2}^{(I)}} \right) < {R_2}} ,}\\
{\mathcal I\left( {\gamma _{{o_1},l,2}^{(I)}} \right) \ge {R_2},\bigcap\limits_{j = 1}^{k - 1} {\mathcal I\left( {\gamma _{{o_2},j,2}^{(I)}} \right) < {R_2}}, \mathcal I\left( {\gamma _{{o_2},k,2}^{(I)}} \right) \ge {R_2}}
\end{array}} \right],
\end{equation}
where $\mathcal I(\gamma) = {\log _2}\left( {1 + \gamma } \right)$ denotes the mutual information given SINR $\gamma$. It is challenging to derive (\ref{eqn:outage_o1_1}) because of the correlated SINRs, i.e., ${\gamma _{{o_1},j,1}^{(I)}}, {\gamma _{{o_1},j,1}^{\left( {II} \right)}}, {\gamma _{{o_1},j,2}^{(I)}}$ and ${\gamma _{{o_2},j,2}^{(I)}}$, whose correlations stem from the temporally and spatially correlated interference, as pointed out in Section \ref{sec:sys_mod}. Thanks to the tractability provided by stochastic geometry, (\ref{eqn:outage_o1_1}) can be derived in closed-form in Appendix \ref{app:outage_o1_1_f} as
\begin{multline}\label{eqn:out_in_ex_manifin}
{\mathbb P\left[ {{{\bar \Theta }_{{o_1},1}},{\Theta _{{o_1},2,l}} ,{\Theta _{{o_2},k}}} \right]} =\\
 {\left[ \begin{array}{l}
\sum\limits_{{\tau _1} = 0}^{l - 1} {\sum\limits_{{\tau _2} = 0}^{k - l} {\sum\limits_{{\tau _3} = 0}^{K - k} {\sum\limits_{{\tau _4} = 0}^{k - 1} {{{\left( { - 1} \right)}^{\sum\limits_{j = 1}^4 {{\tau _j}} }}C_{l - 1}^{{\tau _1}}C_{k - l}^{{\tau _2}}C_{K - k}^{{\tau _3}}C_{k - 1}^{{\tau _4}}} } } } \times\\
{\left(\Psi \left( {{{\bf{U}}_a},{\bs \tau _a};{\frac{{{2^{{R_2}}} - 1}}{{\left( {1 - {2^{{R_2}}}{\beta ^2}} \right)\ell \left( {{d_2}} \right)}}},{\tau _4} + 1} \right) - \Psi \left( {{{\bf{U}}_a},{\bs \tau _b};{\frac{{{2^{{R_2}}} - 1}}{{\left( {1 - {2^{{R_2}}}{\beta ^2}} \right)\ell \left( {{d_2}} \right)}}},{\tau _4} + 1} \right)\right)}
\end{array} \right]^ + },
\end{multline}
where ${\bs{\tau }}_a = \left( {{\tau _1}+1,{\tau _2},{\tau _3}} \right)$, ${{\bs{\tau }}_b} = \left( {{\tau _1},{\tau _2}{\rm{ + }}1,{\tau _3}} \right)$ and $ {\bf{U}}_a = \left( {\frac{{{2^{{R_2}}} - 1}}{{\left( {1 - {2^{{R_2}}}{\beta ^2}} \right)\ell \left( {{d_1}} \right)}},\frac{{{2^{{R_1}}} - 1}}{{{\beta ^2}\ell \left( {{d_1}} \right)}},\frac{{{2^{{R_1}}} - 1}}{{\ell \left( {{d_1}} \right)}}} \right)$.
Herein, it should be mentioned that ${1 - {2^{{R_2}}}{\beta ^2}} > 0$, otherwise user 1 is unable to mitigate the NOMA interference ${\bf s}_2$. In addition, the function of $\Psi ({\bf{U}},\bs \tau ;\hat {\bf{U}},\hat {\bs \tau} )$ is defined as
\begin{equation}\label{eqn:genr_decoding_scu1}
\Psi ({\bf{U}},\bs \tau ;\hat {\bf{U}},\hat {\bs \tau} ) = e^{{ - \frac{{{\sigma ^2}}}{P}\left( {{{\bf{U}}\bs \tau}^{\rm{T}}  + {{\hat {\bf{U}}}}\hat{\bs \tau}^{\rm{T}} } \right) - \lambda \varphi ({\bf{U}},\bs \tau ;\hat {\bf{U}},\hat{\bs \tau} )}}
,
\end{equation}
where ${\bf U} = (U_1,\cdots,U_N)$, ${\bs \tau} = (\tau_1,\cdots,\tau_N)$, $\hat{\bf U} = (\hat U_1,\cdots,\hat U_M)$, $\hat{\bs \tau} = (\hat \tau_1,\cdots,\hat \tau_M)$, and
\begin{equation}\label{eqn:varphi_given}
\varphi ({\bf{U}},\bs \tau ;\hat {\bf{U}},\hat{\bs \tau} ) =
\int\nolimits_{{\mathbb R^2}} {\left( \begin{array}{l}
1 - \prod\limits_{n = 1}^N {\frac{1}{{{{\left( {1 + {U_n}\ell \left( {\left\| u \right\|} \right)} \right)}^{{\tau _n}}}}}} \prod\limits_{n = 1}^M {\frac{1}{{{{\left( {1 + {{\hat U}_n}\ell \left( {\left\| {u + {o_1} - {o_2}} \right\|} \right)} \right)}^{{{\hat \tau }_n}}}}}}
\end{array} \right)du}.
\end{equation}

\subsubsection{${\mathbb P\left[ {{{\bar \Theta }_{{o_1},1}},{{\bar \Theta }_{{o_1},2}},{\Theta _{{o_2},k}}} \right]}$} Once user 2 succeeds in decoding ${\bf s}_2$ after $k$ HARQ rounds, the source device will deliver only ${\bf s}_1$ in subsequent retransmissions, which will be straightforward decoded at user 1 without the use of SIC.
Accordingly, ${\mathbb P\left[ {{{\bar \Theta }_{{o_1},1}},{{\bar \Theta }_{{o_1},2}},{\Theta _{{o_2},k}}} \right]}$ can be written as
\begin{equation}\label{eqn:outage_o1_2}
{\mathbb P\left[ {{{\bar \Theta }_{{o_1},1}},{{\bar \Theta }_{{o_1},2}},{\Theta _{{o_2},k}}} \right]}
 =
 \mathbb P\left[ {\begin{array}{*{20}{c}}
{\bigcap\limits_{j = k + 1}^K {\mathcal I\left( {\gamma _{{o_1},j,1}^{(\Rmnum{2})}} \right) < {R_1}} ,}
{\bigcap\limits_{j = 1}^k {\mathcal I\left( {\gamma _{{o_1},j,2}^{(I)}} \right) < {R_2}} ,}\\
{\bigcap\limits_{j = 1}^{k - 1} {\mathcal I\left( {\gamma _{{o_2},j,2}^{(I)}} \right) < {R_2}} ,\mathcal I\left( {\gamma _{{o_2},k,2}^{(I)}} \right) \ge {R_2},}
\end{array}} \right].
\end{equation}
With the same approach in Appendix \ref{app:outage_o1_1_f}, (\ref{eqn:outage_o1_2}) can be derived as
\begin{multline}\label{eqnp112out1fin}
{\mathbb P\left[ {{{\bar \Theta }_{{o_1},1}},{{\bar \Theta }_{{o_1},2}},{\Theta _{{o_2},k}}} \right]} = \sum\limits_{{\tau _1} = 0}^{K - k} {\sum\limits_{{\tau _2} = 0}^k {\sum\limits_{{\tau _3} = 0}^{k - 1} {{{\left( { - 1} \right)}^{\sum\limits_{j = 1}^3 {{\tau _j}} }}C_{K - k}^{{\tau _1}}C_k^{{\tau _2}}C_{k - 1}^{{\tau _3}}}\times } } \\
\quad \Psi \left( {\left( {\frac{{{2^{{R_1}}} - 1}}{{\ell \left( {{d_1}} \right)}},\frac{{{2^{{R_2}}} - 1}}{{\left( {1 - {2^{{R_2}}}{\beta ^2}} \right)\ell \left( {{d_1}} \right)}}} \right),\left( {{\tau _1},{\tau _2}} \right);\frac{{{2^{{R_2}}} - 1}}{{\left( {1 - {2^{{R_2}}}{\beta ^2}} \right)\ell \left( {{d_2}} \right)}},{\tau _3} + 1} \right).
\end{multline}

\subsubsection{${\mathbb P\left[ {{{\bar \Theta }_{{o_1},1}},{\Theta _{{o_1},2,l}} ,{{\bar \Theta }_{{o_2}}}} \right]}$}\label{subsec:p3}
When user 1 successfully decodes ${\bf s}_2$ after $l$ HARQ rounds, it means that user 1 can fully eliminate the NOMA interference in the current and subsequent HARQ rounds utilized to decode ${\bf s}_1$, which improves the outage probability. Therefore, ${\mathbb P\left[ {{{\bar \Theta }_{{o_1},1}},{\Theta _{{o_1},2, l}} ,{{\bar \Theta }_{{o_2}}}} \right]}$ can be expressed as
\begin{equation}\label{eqn:outage_o1_3}
{\mathbb P\left[ {{{\bar \Theta }_{{o_1},1}},{\Theta _{{o_1},2,l}} ,{{\bar \Theta }_{{o_2}}}} \right]} =
\mathbb P\left[ {\begin{array}{*{20}{c}}
{\bigcap\limits_{j = l}^K {\mathcal I\left( {\gamma _{{o_1},j,1}^{(I)}} \right) < {R_1}} ,}
\bigcap\limits_{j = 1}^{l - 1} {\mathcal I\left( {\gamma _{{o_1},j,2}^{(I)}} \right) < {R_2}} ,\\\mathcal I\left( {\gamma _{{o_1},l,2}^{(I)}} \right) \ge {R_2},
{\bigcap\limits_{j = 1}^K {\mathcal I\left( {\gamma _{{o_2},j,2}^{(I)}} \right) < {R_2}} }
\end{array}} \right].
\end{equation}
Likewise, (\ref{eqn:outage_o1_3}) can be derived as
\begin{align}\label{eqn:out_o1_3_fin}
&{\mathbb P\left[ {{{\bar \Theta }_{{o_1},1}},{\Theta _{{o_1},2,l}} ,{{\bar \Theta }_{{o_2}}}} \right]} = \notag \\
& \quad \quad \quad {\left[ {\begin{array}{*{20}{l}}
{\sum\limits_{{\tau _1} = 0}^{l - 1} {\sum\limits_{{\tau _2} = 0}^{K - l} {\sum\limits_{{\tau _3} = 0}^K {{{\left( { - 1} \right)}^{\sum\limits_{j = 1}^3 {{\tau _j}} }}C_{l - 1}^{{\tau _1}}C_{K - l}^{{\tau _2}}C_K^{{\tau _3}}} } }  \times }\\
{\left(
\Psi \left( {{{\bf{U}}_b},{\bs \tau _c};\frac{{{2^{{R_2}}} - 1}}{{\left( {1 - {2^{{R_2}}}{\beta ^2}} \right)\ell \left( {{d_2}} \right)}},{\tau _3}} \right)
 - \Psi \left( {{{\bf{U}}_b},{\bs \tau _d};\frac{{{2^{{R_2}}} - 1}}{{\left( {1 - {2^{{R_2}}}{\beta ^2}} \right)\ell \left( {{d_2}} \right)}},{\tau _3}} \right)
\right)}
\end{array}} \right]^ + },
\end{align}
where ${\bs \tau _c} = \left( {{\tau _1}+1,{\tau _2}} \right)$, ${\bs \tau _d} = \left( {{\tau _1},{\tau _2} + 1} \right)$ and ${{\bf{U}}_b} = \left( {\frac{{{2^{{R_2}}} - 1}}{{\left( {1 - {2^{{R_2}}}{\beta ^2}} \right)\ell \left( {{d_1}} \right)}},\frac{{{2^{{R_1}}} - 1}}{{{\beta ^2}\ell \left( {{d_1}} \right)}}} \right)$.

\subsubsection{$\mathbb P \left[ {{{\bar \Theta }_{{o_1},1}},{{\bar \Theta }_{{o_1},2}},{{\bar \Theta }_{{o_2}}}} \right]$}\label{subsec:p4}
If user 1 fails to mitigate the NOMA interference and user 2 fails to decode its own message after $K$ transmissions, it is impossible for user 1 to decode ${\bf s}_1$. Thus $\mathbb P\left[ {{{\bar \Theta }_{{o_1},1}},{{\bar \Theta }_{{o_1},2}},{{\bar \Theta }_{{o_2}}}} \right]$ is expressed as
\begin{equation}\label{eqn:outage_o1_4}
\mathbb P\left[ {{{\bar \Theta }_{{o_1},1}},{{\bar \Theta }_{{o_1},2}},{{\bar \Theta }_{{o_2}}}} \right] =
\mathbb P\left[ {\begin{array}{*{20}{c}}
{\bigcap\limits_{j = 1}^K {\mathcal I\left( {\gamma _{{o_1},j,2}^{(I)}} \right) < {R_2}} },
{\bigcap\limits_{j = 1}^K {\mathcal I\left( {\gamma _{{o_2},j,2}^{(I)}} \right) < {R_2}} }
\end{array}} \right].
\end{equation}


Similarly, (\ref{eqn:outage_o1_4}) can finally be derived as
\begin{equation}\label{eqn:out_o1_4_fin}
\mathbb P\left[ {{{\bar \Theta }_{{o_1},1}},{{\bar \Theta }_{{o_1},2}},{{\bar \Theta }_{{o_2}}}} \right] =\sum\limits_{{\tau _1} = 0}^K {\sum\limits_{{\tau _2} = 0}^K {{{\left( { - 1} \right)}^{\sum\limits_{j = 1}^2 {{\tau _j}} }}C_K^{{\tau _1}}C_K^{{\tau _2}}} }
\Psi \left( {\frac{{{2^{{R_2}}} - 1}}{{\left( {1 - {2^{{R_2}}}{\beta ^2}} \right)\ell \left( {{d_1}} \right)}},{\tau _1};\frac{{{2^{{R_2}}} - 1}}{{\left( {1 - {2^{{R_2}}}{\beta ^2}} \right)\ell \left( {{d_2}} \right)}},{\tau _2}} \right).
\end{equation}

\subsection{The outage event ${{ \mathcal{O}_{K,o_2}}}$}
Similar to (\ref{eqn:out_K_o1}), the probability of outage event at user 2, i.e., ${\mathcal{O}_{K,{o_2}}}$, can be obtained by using law of total probability as
\begin{multline}\label{eqn:pout_o2}
{\mathcal{O}_{K,{o_2}}} = \mathbb P[\bar \Theta_{o_2}]
=\mathbb P
\left[ {\left( {\bigcup\limits_{k = 1}^K {{\Theta _{{o_1},1,k}}} } \right)\bigcup {{{\bar \Theta }_{{o_1},1}}} ,\left( {\bigcup\limits_{l = 1}^K {{\Theta _{{o_1},2,l}}} } \right)\bigcup {{{\bar \Theta }_{{o_1},2}}} ,{{\bar \Theta }_{{o_2}}}} \right] \\
=\sum\limits_{l = 1}^K {\sum\limits_{k = l}^K {\mathbb P\left[ {{\Theta _{{o_1},1,k}} ,{\Theta _{{o_1},2,l}},{{\bar \Theta }_{{o_2}}}} \right]} }
 + \sum\limits_{l = 1}^K {\mathbb P\left[ {{\bar \Theta }_{{o_1},1}},{\Theta _{{o_1},2,l}},{{\bar \Theta }_{{o_2}}} \right]}
 + \mathbb P\left[ {{\bar \Theta }_{{o_1},1}},{{\bar \Theta }_{{o_1},2}},{{{\bar \Theta }_{{o_2}}}} \right],
\end{multline}
where the last step holds because of ${\Theta _{{o_1},1,k}}\bigcap {{\Theta _{{o_1},2,l}}}  = \emptyset$ if $k < l$ and ${\Theta _{{o_1},1,k}}\bigcap {{{\bar \Theta }_{{o_1},2}}} \bigcap {{{\bar \Theta }_{{o_2}}}} =\emptyset$. Noting that ${\mathbb P\left[ {{{\bar \Theta }_{{o_1},1}},{\Theta _{{o_1},2,l}} ,{{\bar \Theta }_{{o_2}}}} \right]}$ and $\mathbb P \left[ {{{\bar \Theta }_{{o_1},1}},{{\bar \Theta }_{{o_1},2}},{{\bar \Theta }_{{o_2}}}} \right]$ have been derived in Sections \ref{subsec:p3} and \ref{subsec:p4}, respectively. Hence, the remaining term  is ${\mathbb P\left[ {{ \Theta }_{{o_1},1,k}},{\Theta _{{o_1},2,l}} ,{{\bar \Theta }_{{o_2}}} \right]}$, which is derived in the sequel.

Suppose that user 1 successfully decodes ${\bf s}_2$ after $l$ HARQ rounds and ${\bf s}_1$ in the $k$-th HARQ round with SIC, where $k \ge l$. Thereupon, user 1 and the source device cooperate to deliver the message to user 2 in the subsequent transmissions. In this case, the outage probability of user 2 after $K$ HARQ rounds, i.e., ${\mathbb P\left[ {{\Theta _{{o_1},1,k}},{\Theta _{{o_1},2,l}},{{\bar \Theta }_{{o_2}}}} \right]}$, is obtained explicitly by considering the two cases of whether $k=l$ or not. Firstly, if $k=l$, it means that user 1 successfully subtracts NOMA interference and decodes ${\bf s}_1$ at the same HARQ round, ${\mathbb P\left[{{\Theta _{{o_1},1,l}},{\Theta _{{o_1},2,l}},{{\bar \Theta }_{{o_2}}}} \right]}$ can thus be derived as
\begin{equation}\label{eqn:out_o2_1_case1}
{\mathbb P\left[{{\Theta _{{o_1},1,l}},{\Theta _{{o_1},2,l}},{{\bar \Theta }_{{o_2}}}} \right]}
 =
 \mathbb P\left[ {\begin{array}{*{20}{c}}
{\bigcap\limits_{j = 1}^l {{\cal I}\left( {\gamma _{{o_2},j,2}^{(I)}} \right) < {R_2}} ,\bigcap\limits_{j = l + 1}^K {{\cal I}\left( {\gamma _{{o_2},j,2}^{\left( {II} \right)}} \right) < {R_2}} ,}\\
{\bigcap\limits_{j = 1}^{l - 1} {{\cal I}\left( {\gamma _{{o_1},j,2}^{(I)}} \right) < {R_2}} ,{\cal I}\left( {\gamma _{{o_1},l,2}^{(I)}} \right) \ge {R_2},}
{{\cal I}\left( {\gamma _{{o_1},l,1}^{(I)}} \right) \ge {R_1}}
\end{array}} \right].
\end{equation}
By applying the method introduced in Appendix \ref{app:outage_o1_1_f}, we have 
\begin{multline}\label{eqn:out_o1_3fin}
{\mathbb P\left[ {{\Theta _{{o_1},1,l}},{\Theta _{{o_1},2,l}},{{\bar \Theta }_{{o_2}}}} \right]} =
\sum\limits_{{\tau _1} = 0}^{l - 1} {\sum\limits_{{\tau _2} = 0}^l {\sum\limits_{{\tau _3} = 0}^{K - l} {{{\left( { - 1} \right)}^{\sum\limits_{j = 1}^3 {{\tau _j}} }}C_{l - 1}^{{\tau _1}}C_l^{{\tau _2}}C_{K - l}^{{\tau _3}}} \times} } \\
 \Psi \left( \begin{array}{l}
\left( {\frac{{{2^{{R_2}}} - 1}}{{\left( {1 - {2^{{R_2}}}{\beta ^2}} \right)\ell \left( {{d_1}} \right)}},\max \left\{ {\frac{{{2^{{R_2}}} - 1}}{{\left( {1 - {2^{{R_2}}}{\beta ^2}} \right)\ell \left( {{d_1}} \right)}},\frac{{{2^{{R_1}}} - 1}}{{{\beta ^2}\ell \left( {{d_1}} \right)}}} \right\}} \right),\left( {{\tau _1},1} \right);\\
\left( {\frac{{{2^{{R_2}}} - 1}}{{\left( {1 - {2^{{R_2}}}{\beta ^2}} \right)\ell \left( {{d_2}} \right)}},\frac{{{2^{{R_2}}} - 1}}{{\ell \left( D \right){ + \ell \left( {{d_2}} \right)}}}} \right),\left( {{\tau _2},{\tau _3}} \right)
\end{array} \right),
\end{multline}

On the other hand, if $k>l$, that is, the events of the successful message decoding and NOMA interference cancellation at user 1 occur in two different HARQ rounds, ${\mathbb P\left[{{\Theta _{{o_1},1,k}},{\Theta _{{o_1},2,l}},{{\bar \Theta }_{{o_2}}}} \right]}$ can be expressed as
\begin{equation}\label{eqn:out_o2_1_case2}
{\mathbb P\left[{{\Theta _{{o_1},1,k}},{\Theta _{{o_1},2,l}},{{\bar \Theta }_{{o_2}}}} \right]}
 =
  \mathbb P\left[ {\begin{array}{*{20}{c}}
{\bigcap\limits_{j = 1}^k {\mathcal I\left( {\gamma _{{o_2},j,2}^{(I)}} \right) < {R_2}} ,\bigcap\limits_{j = k + 1}^K {\mathcal I\left( {\gamma _{{o_2},j,2}^{(\Rmnum{2})}} \right) < {R_2}} , \bigcap\limits_{j = 1}^{l - 1} {\mathcal I\left( {\gamma _{{o_1},j,2}^{(I)}} \right) < {R_2}} ,}\\
{\mathcal I\left( {\gamma _{{o_1},l,2}^{(I)}} \right) \ge {R_2},}
{\bigcap\limits_{j = l}^{k - 1} {\mathcal I\left( {\gamma _{{o_1},j,1}^{(I)}} \right) < {R_1}} ,\mathcal I\left( {\gamma _{{o_1},k,1}^{(I)}} \right) \ge {R_1}}
\end{array}} \right].
\end{equation}
Similarly, ${\mathbb P\left[{{\Theta _{{o_1},1,k}},{\Theta _{{o_1},2,l}},{{\bar \Theta }_{{o_2}}}} \right]}$ can be eventually derived as
\begin{equation}\label{eqn:out_o1_3finkll}
{\mathbb P\left[{{\Theta _{{o_1},1,k}},{\Theta _{{o_1},2,l}},{{\bar \Theta }_{{o_2}}}} \right]} =\\
\left[\begin{array}{l}
\sum\limits_{{\tau _1} = 0}^{l - 1} {\sum\limits_{{\tau _2} = 0}^{k - l - 1} {\sum\limits_{{\tau _3} = 0}^k {\sum\limits_{{\tau _4} = 0}^{K - k} {{{\left( { - 1} \right)}^{\sum\limits_{j = 1}^4 {{\tau _j}} }}C_{l - 1}^{{\tau _1}}C_{k - l - 1}^{{\tau _2}}C_k^{{\tau _3}}C_{K - k}^{{\tau _4}}} } } }\\
\quad \times \left( {\Psi \left( {{{\bf{U}}_d},{\bs \tau _f};{{\bf{U}}_c},{\bs \tau _e}} \right) - \Psi \left( {{{\bf{U}}_d},{\bs \tau _g};{{\bf{U}}_c},{\bs \tau _e}} \right)} \right)
\end{array}\right]^+, \\
k > l,
\end{equation}
where ${{\bs{\tau }}_e} = \left( {{\tau _3},{\tau _4}} \right)$, ${{\bs{\tau }}_f} = \left( {{\tau _1}+1,{\tau _2} + 1} \right)$ and ${{\bs{\tau }}_g} = \left( {{\tau _1},{\tau _2} + 2} \right)$, ${{\bf{U}}_c} = \left( {\frac{{{2^{{R_2}}} - 1}}{{\left( {1 - {2^{{R_2}}}{\beta ^2}} \right)\ell \left( {{d_2}} \right)}},\frac{{{2^{{R_2}}} - 1}}{{\ell \left( D \right){ + \ell \left( {{d_2}} \right)}}}} \right)$ and ${{\bf{U}}_d} = \left( {\frac{{{2^{{R_2}}} - 1}}{{\left( {1 - {2^{{R_2}}}{\beta ^2}} \right)\ell \left( {{d_1}} \right)}},\frac{{{2^{{R_1}}} - 1}}{{{\beta ^2}\ell \left( {{d_1}} \right)}}} \right)$.


\subsection{The outage event ${{\mathcal{O}_{K,o_1,o_2}}}$}
Analogous to (\ref{eqn:out_K_o1}) and (\ref{eqn:pout_o2}), it follows by using law of total probability that
\begin{align}\label{eqn:out_o1_o2_rew}
{\mathbb \mathcal{O}_{K,{o_1},{o_2}}} &= \mathbb P\left[ {{{\bar \Theta }_{{o_1},1}},{{\bar \Theta }_{{o_2}}}} \right]
= \mathbb P\left[ {{{\bar \Theta }_{{o_1},1}},{\left( {\bigcup\limits_{l = 1}^K {{\Theta _{{o_1},2,l}}} } \right)\bigcup {{{\bar \Theta }_{{o_1},2}}} },{{\bar \Theta }_{{o_2}}}} \right]\notag\\
&= \sum\limits_{l = 1}^K {\mathbb P\left[ {{{\bar \Theta }_{{o_1},1}},{\Theta _{{o_1},2,l}},{{\bar \Theta }_{{o_2}}}} \right]}  + \mathbb P\left[ {{\bar \Theta }_{{o_1},1}},{{\bar \Theta }_{{o_1},2}},{{{\bar \Theta }_{{o_2}}}} \right],
\end{align}
where ${\mathbb P\left[ {{{\bar \Theta }_{{o_1},1}},{\Theta _{{o_1},2,l}},{{\bar \Theta }_{{o_2}}}} \right]}$ and $\mathbb P\left[ {{\bar \Theta }_{{o_1},1}},{{\bar \Theta }_{{o_1},2}},{{{\bar \Theta }_{{o_2}}}} \right]$ have been given by (\ref{eqn:out_o1_3_fin}) and (\ref{eqn:out_o1_4_fin}), respectively.

Accordingly, the outage probabilities ${{\mathcal{O}_{K,o_1}}}$, ${{\mathcal{O}_{K,o_2}}}$ and ${{\mathcal{O}_{K,o_1,o_2}}}$ can be calculated by using (\ref{eqn:out_K_o1}), (\ref{eqn:pout_o2}) and (\ref{eqn:out_o1_o2_rew}), respectively. Substituting them into (\ref{eqn:noma_harq_throughput}) yields the LTAT of the proposed scheme. In order to evaluate the outage probabilities, it essentially resorts to the calculation of the double integral of $\varphi ({\bf{U}},\bs \tau ;\hat {\bf{U}},\hat{\bs \tau} )$ in (\ref{eqn:varphi_given}). Unfortunately, the double integral representation of (\ref{eqn:varphi_given}) entails a high computational complexity on the performance evaluation. Alternatively, we propose an accurate approximation approach to compute (\ref{eqn:varphi_given}) effectively. Since it is usually expected that NOMA users are not far away from each other due to the exploitation of cooperative communications, i.e., small $D$, we have the following theorem to obtain an accurate approximation of $\varphi ({\bf{U}},\bs \tau ;\hat {\bf{U}},\hat{\bs \tau} )$.
\begin{theorem}\label{the:app} For small $D$, $\varphi ({\bf{U}},\bs \tau ;\hat {\bf{U}},\hat{\bs \tau} )$ in (\ref{eqn:varphi_given}) can be written as 
\begin{multline}\label{eqn:varphi_0}
\varphi ({\bf{U}},{\bs \tau} ;{\bf{\hat U}},\hat {\bs \tau} ) \approx \varphi (\tilde{\bf{U}},\tilde{\bs \tau};{\bf{0}},{\bf{0}} ) = \pi{\rm{B}}\left( {1 - \frac{2}{\alpha },\sum\limits_{\iota  = 1}^{N+M} {{\tilde \tau _\iota }}  + 1} \right)\sum\limits_{\kappa  = 1}^{N+M} {{{{\tilde \tau _\kappa }{\tilde U_\kappa }{\tilde U_\mu }^{\frac{2}{\alpha } - 1}}}  }\times \\
 F_D^{\left( {{N+M} - 1} \right)}\left( {1 - \frac{2}{\alpha },\left( {{\tilde \tau _\iota } + {\delta _{\iota  - \kappa }}} \right)_{\iota  = 1,\iota  \ne \mu }^{N+M};\sum\limits_{\iota  = 1}^{N+M} {{\tilde \tau _\iota }}  + 1;\left( {1 - \frac{{{\tilde U_\iota }}}{{{\tilde U_\mu }}}} \right)_{\iota  = 1,\iota  \ne \mu }^{N+M}} \right),
\end{multline}
wherein $\tilde{\bf U} = ({\bf U},\hat{\bf U})=(\tilde U_1,\cdots,\tilde U_{N+M})$ and $\tilde{\bs \tau} = ({\bs \tau},\hat{\bs \tau})=(\tilde \tau_1,\cdots,\tilde \tau_{N+M})$, ${{\delta }_s}$ denotes Dirac function, $F_D^{({N})}(\cdot)$ denotes the fourth kind of Lauricella function \cite[Eq. A.52]{mathai2009h} and ${\rm B}(a,b) = \frac{\Gamma(a)\Gamma(b)}{\Gamma(a+b)}$ represents Beta function. 
\end{theorem}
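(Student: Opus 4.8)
For small $D$, the statement is an approximation-plus-evaluation claim, so the plan is to prove it in two stages: first reduce the two-center integral \eqref{eqn:varphi_given} to a single-center one by exploiting the smallness of $D$, then evaluate that single-center integral in closed form as a Lauricella $F_D$. For the first stage I would write the integrand of \eqref{eqn:varphi_given} as $1-A(u)\,B(u+o_1-o_2)$, where $A(u)=\prod_{n=1}^{N}(1+U_n\ell(\|u\|))^{-\tau_n}$ and $B(v)=\prod_{n=1}^{M}(1+\hat U_n\ell(\|v\|))^{-\hat\tau_n}$. Because $\|o_1-o_2\|=D$, as $D\to 0$ the translated factor $B(u+o_1-o_2)$ converges to $B(u)$ for every $u\neq 0$; the integrand is uniformly bounded by $1$ and decays like $\|u\|^{-\alpha}$ at infinity, so dominated convergence justifies replacing $B(u+o_1-o_2)$ by $B(u)$ (a Taylor expansion about $u$ shows the $O(D)$ correction integrates to zero by radial symmetry, so the error is in fact $O(D^2)$). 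Merging the two products then collapses \eqref{eqn:varphi_given} into the single-center form $\varphi(\tilde{\bf U},\tilde{\bs\tau};{\bf 0},{\bf 0})=\int_{\mathbb R^2}\big(1-\prod_{\iota=1}^{N+M}(1+\tilde U_\iota\ell(\|u\|))^{-\tilde\tau_\iota}\big)\,du$ with $\tilde{\bf U}=({\bf U},\hat{\bf U})$ and $\tilde{\bs\tau}=({\bs\tau},\hat{\bs\tau})$.

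For the second stage I would pass to polar coordinates and use $\ell(r)=r^{-\alpha}$, reducing the area integral to $2\pi\int_0^\infty\!\big(1-\prod_\iota(1+\tilde U_\iota r^{-\alpha})^{-\tilde\tau_\iota}\big)r\,dr$, then substitute $x=r^{-\alpha}$ to obtain a Mellin-type integral $\propto\int_0^\infty\big(1-g(x)\big)x^{-2/\alpha-1}\,dx$ with $g(x)=\prod_\iota(1+\tilde U_\iota x)^{-\tilde\tau_\iota}$. The device that unlocks the closed form is integration by parts: since $g(0)=1$, $g(\infty)=0$, and $2/\alpha\in(0,1)$ because $\alpha>2$, the boundary terms vanish and the troublesome factor $1-g$ is traded for $g'$. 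Differentiating the product gives $g'(x)=\sum_{\kappa}\tilde\tau_\kappa\tilde U_\kappa\,(1+\tilde U_\kappa x)^{-\tilde\tau_\kappa-1}\prod_{\iota\neq\kappa}(1+\tilde U_\iota x)^{-\tilde\tau_\iota}$, which is exactly the source of the sum over $\kappa$ in \eqref{eqn:varphi_0}. Each summand is an integral $\int_0^\infty x^{-2/\alpha}\prod_\iota(1+\tilde U_\iota x)^{-\beta_\iota}\,dx$ with $\beta_\kappa=\tilde\tau_\kappa+1$ and $\beta_\iota=\tilde\tau_\iota$ otherwise.

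The final step is to recognize each such integral as an Euler-type single-integral representation of the fourth Lauricella function. Fixing a reference index $\mu$ and substituting $x=\tilde U_\mu^{-1}\,t/(1-t)$ maps $(0,\infty)$ to $(0,1)$, factors out the $\mu$-th term, and converts the remaining factors into $\prod_{\iota\neq\mu}(1-z_\iota t)^{-\beta_\iota}$ with $z_\iota=1-\tilde U_\iota/\tilde U_\mu$. Matching the exponents of $t$ and $1-t$ against $F_D^{(n)}(a,\{b_\iota\};c;\{z_\iota\})=\tfrac{1}{\mathrm{B}(a,c-a)}\int_0^1 t^{a-1}(1-t)^{c-a-1}\prod_\iota(1-z_\iota t)^{-b_\iota}\,dt$ then fixes $a=1-2/\alpha$, $c=\sum_\iota\tilde\tau_\iota+1$, and $b_\iota=\tilde\tau_\iota+\delta_{\iota-\kappa}$, and produces both the Beta-function prefactor and the $\tilde U_\mu^{2/\alpha-1}$ scaling shown in \eqref{eqn:varphi_0}. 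Collecting the $\pi$ from the angular integration together with the $1/(2/\alpha)$ arising in the integration by parts completes the identification.

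I expect the main obstacle to lie in this second-and-third stage rather than in the small-$D$ reduction. The $1-\prod$ structure is not itself a Lauricella integral, and the maneuver of differentiating the product to linearize it, combined with the correct choice of reference index $\mu$ and the careful bookkeeping of the shifted exponents $\beta_\iota$ (hence the Kronecker shift $\delta_{\iota-\kappa}$ and the argument set $1-\tilde U_\iota/\tilde U_\mu$), is where the derivation is delicate. By contrast, the small-$D$ step is a routine continuity argument whose only care points are the integrable singularity at the origin and the radial cancellation of the first-order term in $D$.
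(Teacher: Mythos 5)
Your proposal is correct and follows essentially the same route as the paper's Appendix~B: reduce to the single-center integral by setting $o_1=o_2$ for small $D$, linearize the $1-\prod(\cdot)$ structure via integration by parts (which is exactly what generates the sum over $\kappa$ and the exponent shift $\delta_{\iota-\kappa}$), and then map $(0,\infty)$ to $(0,1)$ to recognize the Euler integral of $F_D$ with the maximal $\tilde U_\mu$ as reference index --- your substitutions $x=r^{-\alpha}$ followed by $t=\tilde U_\mu x/(1+\tilde U_\mu x)$ compose to precisely the paper's $s=r^\alpha$ and $z=\tilde U_\mu/(s+\tilde U_\mu)$. The only differences are cosmetic (the order of the change of variables versus the integration by parts, and your dominated-convergence justification of the small-$D$ step, which the paper states without proof).
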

\begin{proof}
  Please see Appendix \ref{app:proof_o1eqo2}.
\end{proof}
It is worth noting that the simple and closed-form expression of $\varphi ({\bf{U}},\bs \tau ;\hat {\bf{U}},\hat{\bs \tau} )$ can significantly facilitate later optimal system design.
%
%

\section{Numerical Results and Discussions}\label{sec:num}
This section first validates the developed mathematical model via independent system level simulations. Numerical results are also presented to demonstrate the effect of interference on NOMA performance as well as to quantify the gains offered by the proposed NOMA scheme. The proposed interference-aware design for the cooperative HARQ-assisted NOMA scheme is then presented. Note that the approximation approach of (\ref{eqn:varphi_0}) in Theorem \ref{the:app} is utilized to optimize system performance, including the maximization of LTAT and the maximization of ASE. Unless otherwise specified, the network parameters are selected as follows: $d_1=5$m, $d_2 = 10$m, $R_1 = 4R_2 = 2$ bps/Hz, $D=10$m, $\beta^2=0.3$ and $\lambda=5*10^{-5}{{\rm m}^{-2}}$.
\subsection{Verification}
In Fig. \ref{fig:ver_ltat}, the LTAT is plotted against the transmit signal-to-noise $\frac{P}{\sigma^2}$ (SNR, the ratio of transmit power to AWGN power) for different $K$, where Monte Carlo simulations are conducted to confirm the analysis. With regard to the approximation approach for $K=1$, it is readily found from (\ref{eqn:varphi_11}) that the approximation in (\ref{eqn:varphi_0}) becomes an equality, and hence, the exact results for $K=1$ can be obtained with (\ref{eqn:varphi_0}). Clearly, Fig. \ref{fig:ver_ltat} shows an excellent agreement between the simulation results and the exact results, and justifies the accuracy of the approximation results as well. Not surprisingly, the LTAT can be improved through increasing the transmit SNR, while it saturates in high SNR regime at a value lower than the sum of transmission rates, i.e., $R_1 + R_2 = 2.5 $bps/Hz, due to the interference incurred by other active D2D transmitters. Additionally, as shown in Fig. \ref{fig:ver_ltat}, we should pay attention to the fact that the increase of the maximal number of transmissions $K$ may yield the deterioration of the LTAT because of the intricate relationship between $\eta$ and $K$. This is essentially due to (\ref{eqn:thr_rew}) that shows that increasing the maximal number of transmissions allows more  information bits to be successfully delivered, nevertheless, the average number of transmissions ${{\mathbb E\left( \mathcal T \right)}}$ increases. The contradictory effects of increasing $K$ thus result in different tendencies of $\eta$ with respect to $K$ under low SNR and under high SNR. More specifically, the increase of $K$ is favorable for $\eta$ below a certain SNR threshold, whereas continuing to increase SNR would become counterproductive for $\eta$.

\begin{figure*}[t!]
    \centering
    \begin{subfigure}[t]{0.45\textwidth}
  \centerline{\includegraphics[width=  3 in]{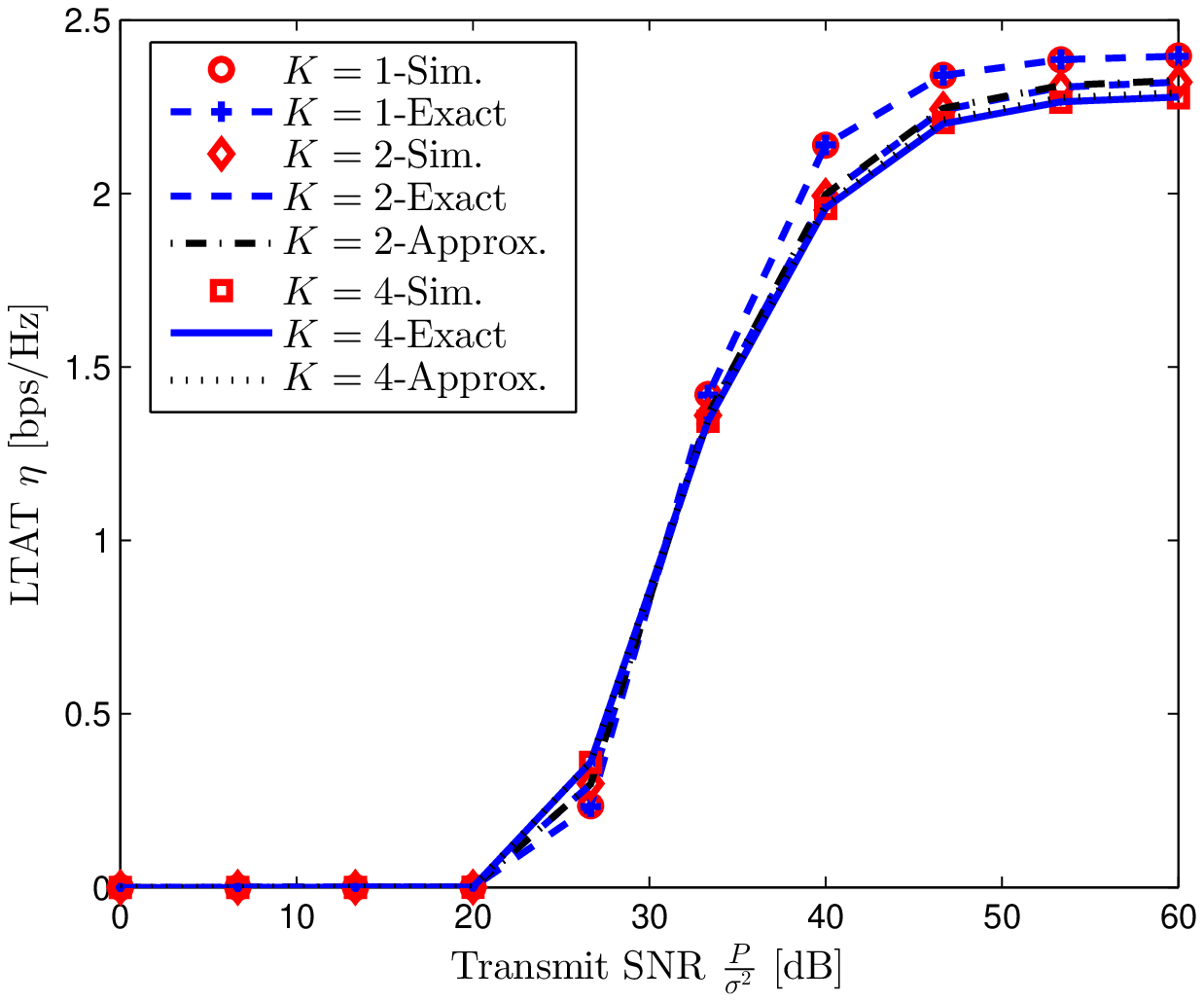}}
      \caption{\, LTAT $\eta$.}
\label{fig:ver_ltat}
   \end{subfigure}
    ~
    \begin{subfigure}[t]{0.45\textwidth}
       \centerline{\includegraphics[width=  3.1 in]{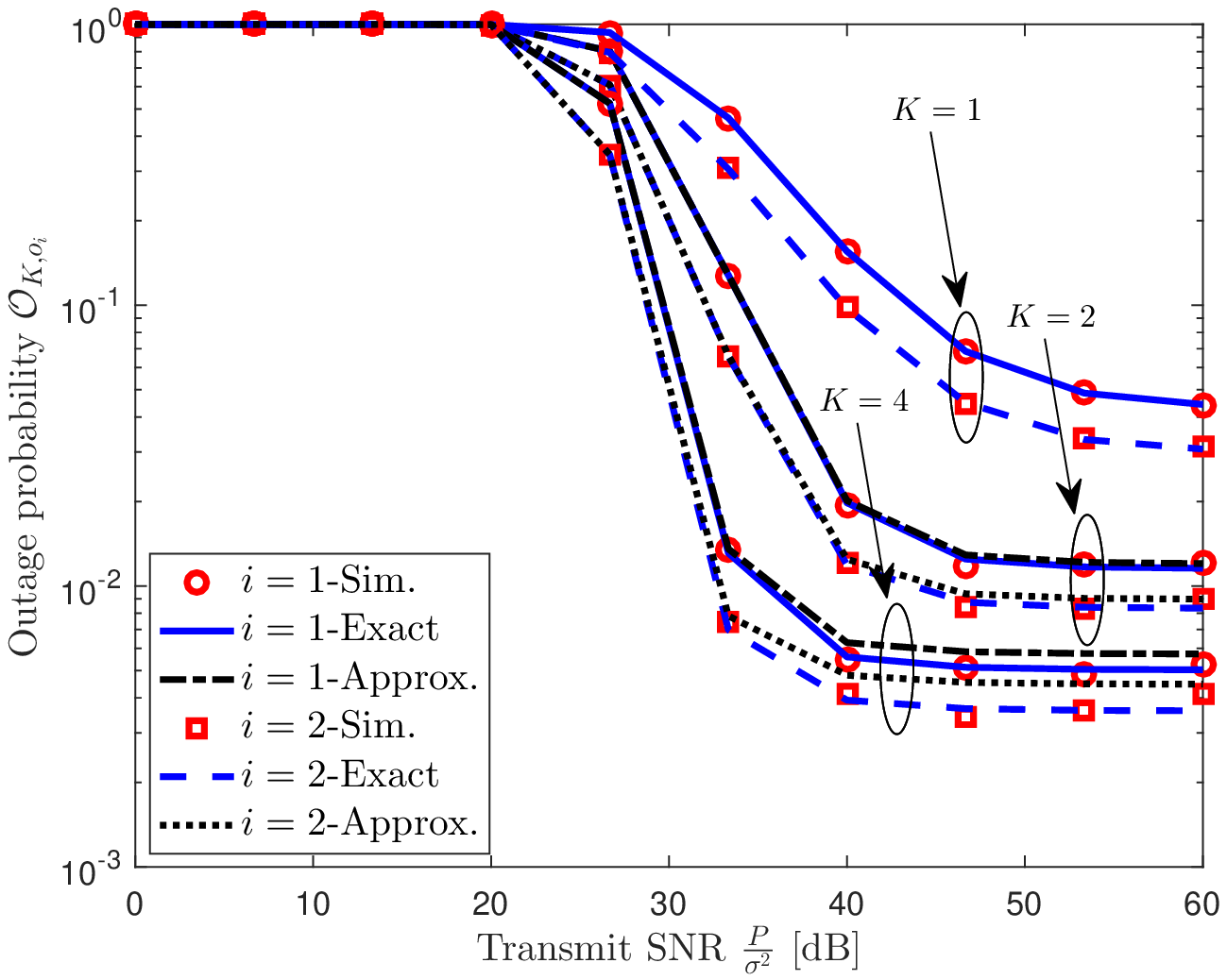}}\caption{\, Outage probability $\mathcal{O}_{K,o_i}$.}
\label{fig:ver_out}
    \end{subfigure}
       \caption{The effect of transmit SNR $\frac{P}{\sigma^2}$.}
       \label{start_fig}
\end{figure*}


Fig. \ref{fig:ver_out} illustrates the outage probabilities of the two NOMA users versus the transmit SNR. The figure further verifies the accuracy of the exact and approximation results. It is easily seen that the outage probabilities of both two users decrease with the transmit SNR but would converge to certain outage floors in high SNR regime due to the co-channel interference, as elucidated in Fig. \ref{fig:ver_ltat}. Moreover, unlike Fig. \ref{fig:ver_ltat}, the outage probabilities can be significantly reduced through increasing the maximal number of transmissions, which manifests  the improved  reliability offered by HARQ. 


\subsection{Effect of Spatially and Temporally Correlated Interference}
Considering temporal and spatial correlation among interferences across all HARQ rounds is important to reveal the true system performance. To illustrate the adverse impact of spatially and temporally correlated interference, Figs. \ref{fig:sptepcorr_ltat} and \ref{fig:sptepcorr_out} compare, respectively, the LTAT $\eta$ and outage probability $\mathcal{O}_{K,o_i}$ of the considered correlated interference model with those of two other simpler interference models that i) ignore the effect of co-channel interferences (labeled as ``No Inter.'' in figures), and ii) ignore the spatial and temporal correlation in co-channel interferences (labeled as ``No Corr.'' in figures). As shown in the two figures, the two simpler interference models provide an unrealistic overestimate of the NOMA performance compared to actual performance especially in high SNR regime. For instance, Fig. \ref{fig:sptepcorr_ltat} shows that the models ignoring interference correlation and assuming no interference overestimate the true performance of LTAT by up to 2\% and 10\%, respectively. Fig. \ref{fig:sptepcorr_out} shows that the actual outage probability is considerably higher than the two simpler interference models by roughly $10^3$\texttildelow$10^4$ times for a fixed value of transmit SNR $\frac{P}{\sigma^2}=40$dB. This is because the temporal and spatial correlation in interferences captures the diversity losses due to the fixed interferers locations~\cite{tanbourgi2014effect}. Therefore, accounting for the spatial and temporal correlation is mandatory to reveal the true system performance.


\begin{figure*}[t!]
    \centering
    \begin{subfigure}[t]{0.45\textwidth}
  \centerline{\includegraphics[width=  3 in]{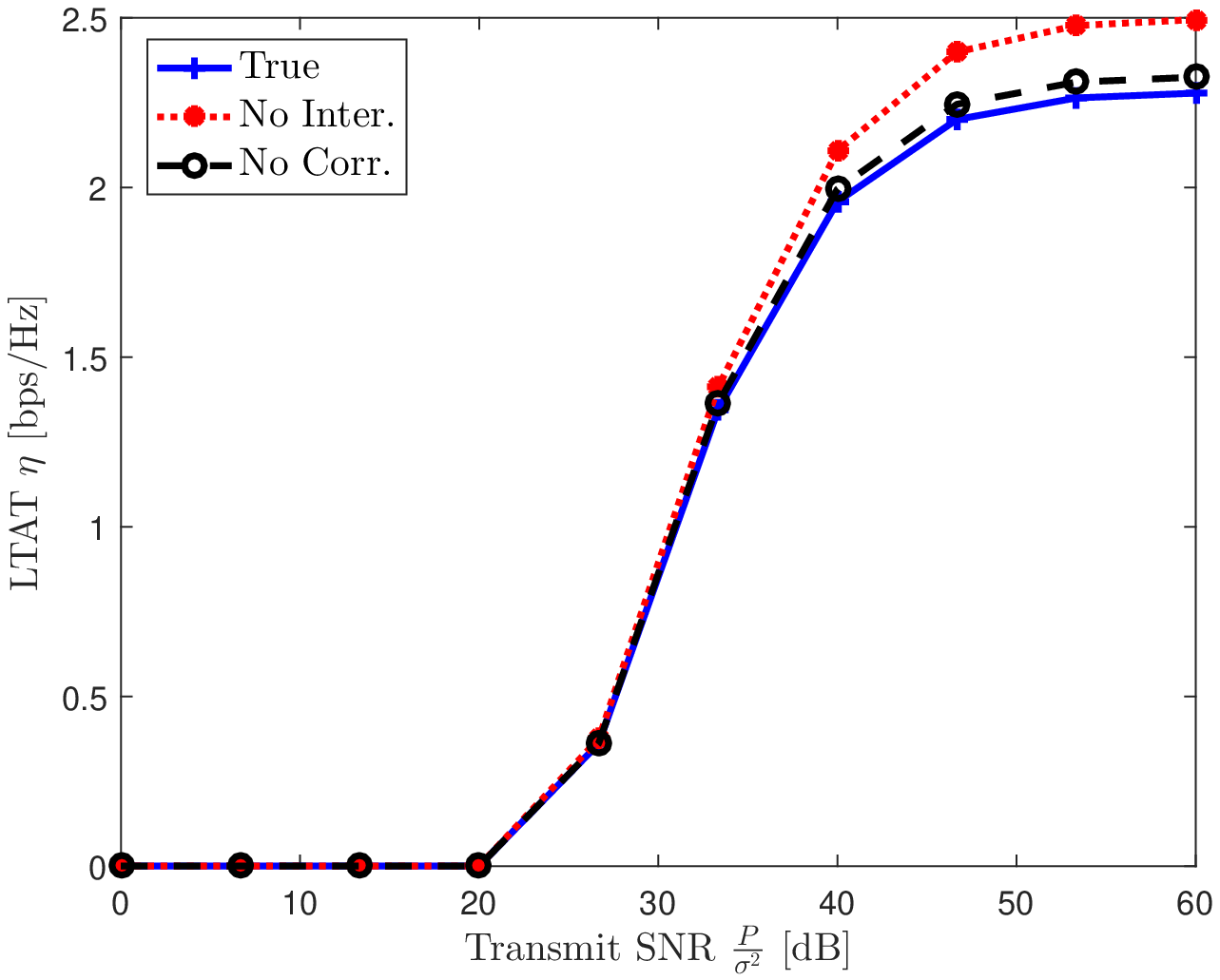}}
      \caption{\, LTAT $\eta$.}
\label{fig:sptepcorr_ltat}
   \end{subfigure}
    ~
    \begin{subfigure}[t]{0.45\textwidth}
       \centerline{\includegraphics[width=  3 in]{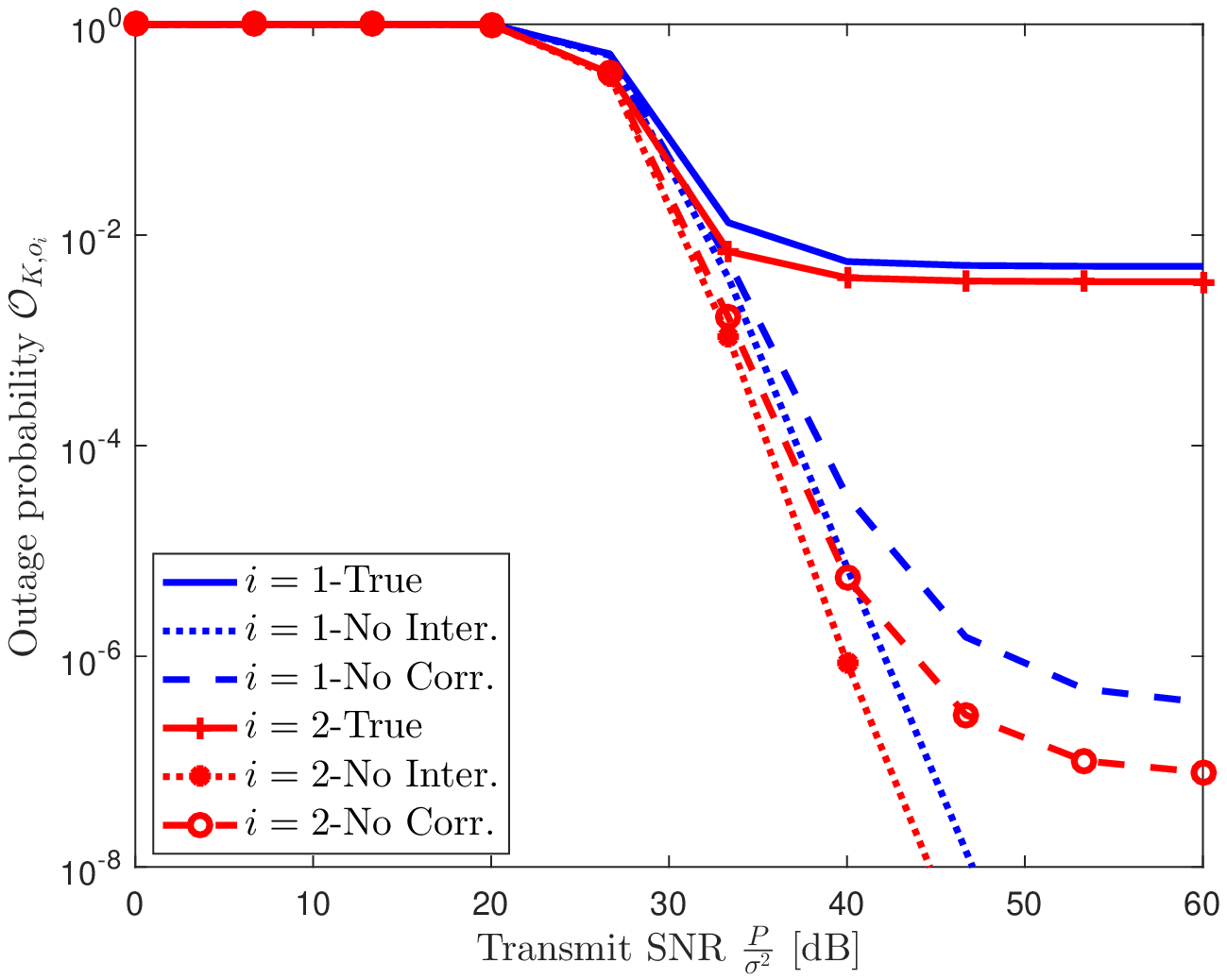}}\caption{\, Outage probability $\mathcal{O}_{K,o_i}$.}
\label{fig:sptepcorr_out}
    \end{subfigure}
       \caption{Effect of spatial and temporal interference correlation.}
\end{figure*}



\subsection{Comparison with non-cooperative HARQ assisted NOMA}
In order to quantify  the value of cooperation, the performance of the proposed scheme is compared with that of non-cooperative HARQ assisted NOMA scheme in this subsection by assuming $\frac{P}{\sigma^2} = 30$dB. 
It is worth noticing that the throughput and outage analyses in Section \ref{sec:per_ana} are also applicable to the non-cooperative HARQ assisted NOMA scheme by setting the transmit power at the relay (i.e., user 1) in phase II to zero. Figs. \ref{fig:comp_coopltat} and \ref{fig:comp_coopout} show the comparison between the two schemes in terms of the LTAT and the outage probability, respectively. It is readily seen in both figures that the proposed cooperative scheme outperforms the non-cooperative HARQ assisted NOMA scheme. For instance, the proposed scheme can reduce the outage probability by up to 32\% given $K=4$, compared with the non-cooperative HARQ assisted NOMA scheme. In addition, the LTAT and the outage probability $\mathcal{O}_{K,o_2}$ of the non-cooperative HARQ assisted NOMA scheme remain constant when the inter-device separation distance  $D$ varies, because the link between two NOMA users is not utilized to retransmit the message of user 2. Whereas the increase of $D$ will degrade the performance of the proposed scheme because of the rising path loss in relaying phase. It is worth noting that the outage probability of user 1 is independent of $D$ because user 2 does not decode nor relay user 1 message. Furthermore, Figs. \ref{fig:comp_coopltat} and \ref{fig:comp_coopout} also justify the accuracy of the approximate expressions in Theorem \ref{the:app}.


\begin{figure*}[t!]
    \centering
    \begin{subfigure}[t]{0.45\textwidth}
  \centerline{\includegraphics[width=  3 in]{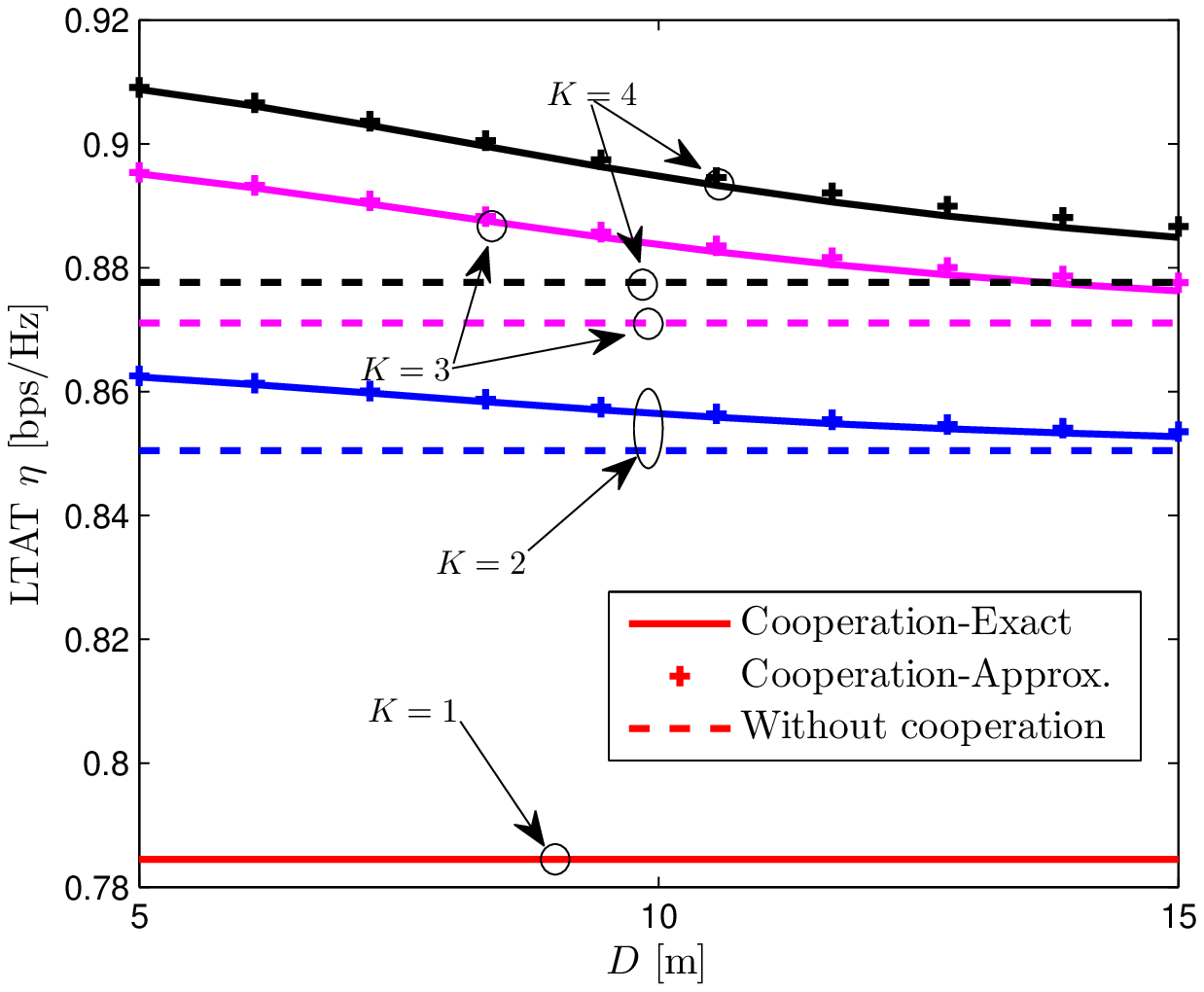}}
      \caption{\, LTAT $\eta$.}
\label{fig:comp_coopltat}
   \end{subfigure}
    ~
    \begin{subfigure}[t]{0.45\textwidth}
       \centerline{\includegraphics[width=  3.05 in]{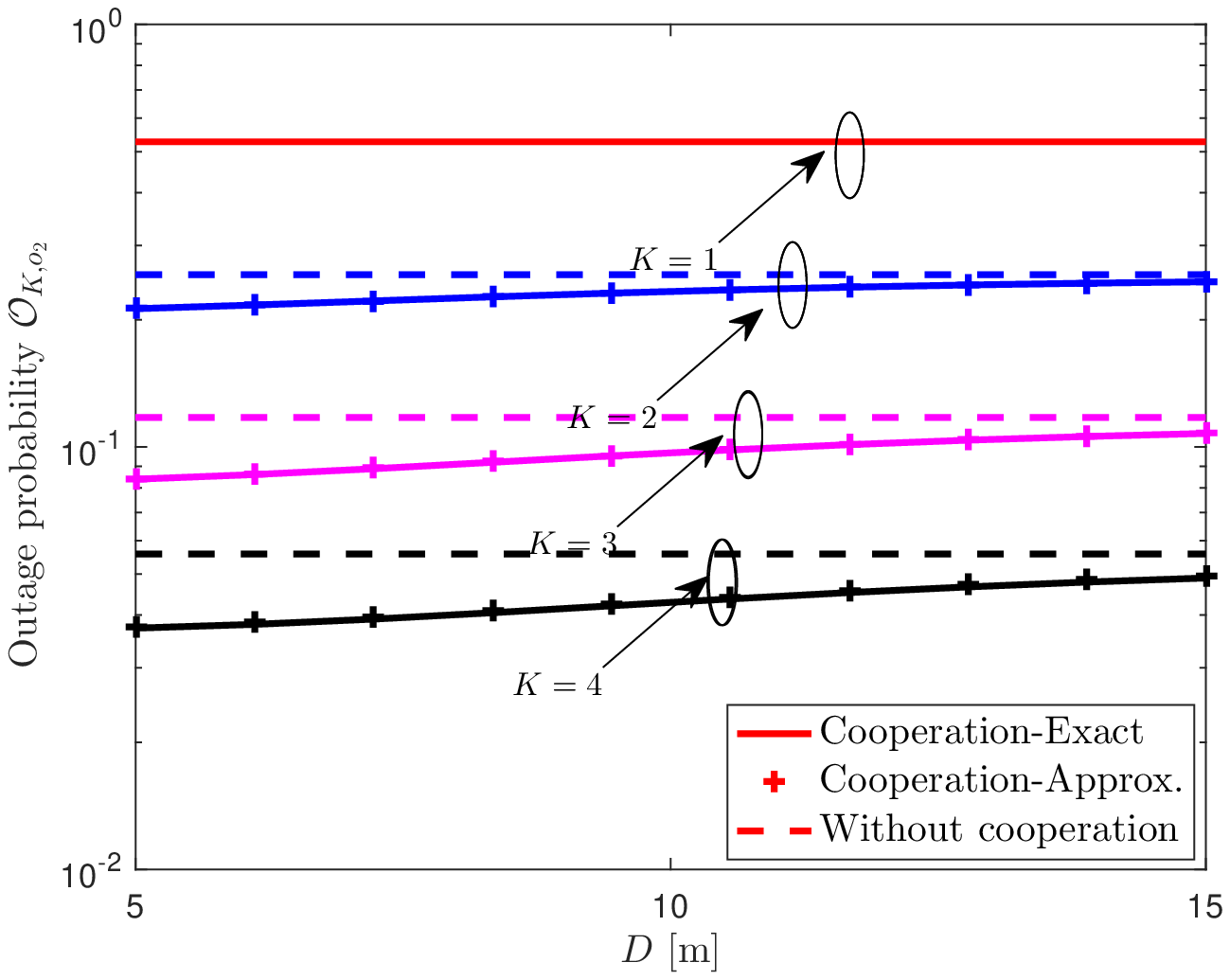}}\caption{\, Outage probability $\mathcal{O}_{K,o_i}$.}
\label{fig:comp_coopout}
    \end{subfigure}
       \caption{Effect of inter-user distance $D$.}
       \label{end_fig}
\end{figure*}



\subsection{Maximization of LTAT}
Figs. \ref{start_fig} - \ref{end_fig} confirm the high accuracy of the approximation approach of (\ref{eqn:varphi_0}). The simple and closed-form expression of (\ref{eqn:varphi_0}) enables the effective evaluation of outage probabilities compared with the double integral representation of (\ref{eqn:varphi_given}). Thus we apply the approximation approach to facilitate the optimal NOMA design in the sequel.

As seen in Fig. \ref{fig:ver_ltat}, the increase of the maximal number of HARQ transmissions may decrease the LTAT. In order to combat the negative impact of co-channel interference and fully exploit the benefit of cooperative HARQ, an interference aware optimal design is proposed herein. Particularly the LTAT is maximized through properly choosing system parameters while maintaining the quality of service. By taking the optimal rate selection as an example, the LTAT is maximized by optimally selecting transmission rates given the predetermined power allocation coefficient $\beta^2$, while guaranteeing outage constraints and the implementation of NOMA protocol. Mathematically, the optimization problem can be formulated as
\begin{equation}\label{eqn:opt_prob_simp}
\begin{array}{*{20}{cl}}
{\mathop {\rm maximize}\limits_{R_1, R_2} }&{\eta}\\
{{\rm{subject}}\,{\rm{to}}}&{\mathcal{O}_{K,o_i} \le \varepsilon_i},\, i= 1,2\\
{}&{0 \le \beta^2 < 2^{-R_2}},\\
\end{array}
\end{equation}
where $\varepsilon_i$ denotes the maximal allowable outage probability for user $i$.
For comparison, the HARQ assisted orthogonal multiple access (OMA) scheme is also implemented, where OMA scheme could be TDMA and Orthogonal frequency-division multiple access (OFDMA) \cite{saito2013non} etc. Unlike the proposed scheme, the HARQ assisted OMA transmission does not require the near user to decode the message of the far user first. Therefore, the near user can not help the source device deliver the message to exploit extra spatial diversity from cooperative communications. The LTAT of the HARQ assisted OMA scheme is derived in Appendix \ref{app:ltat_tdma}. For the fairness of the comparison, 
the same coefficient $\beta^2$ is introduced to allocate the orthogonal resources (bandwidth/time) in the OMA scheme. Moreover, we assume the same outage constraints for two users, i.e., $\varepsilon_1=\varepsilon_2=\varepsilon$.

Fig. \ref{fig:ver_rate_comp} manifests the superiority  of the optimal LTAT achieved by the proposed scheme over that of the OMA scheme under optimal rate selection. For instance, the proposed scheme yields an approximately $47$\% throughput gain when $\frac{P}{\sigma^2}=60$dB and $K=4$, compared with the OMA scheme. In addition, increasing the maximal number of transmissions is in favor of the optimal LTAT no matter under the proposed scheme or under the OMA scheme. It is important to note that the designs based on the `No Inter.' and `No Corr.' violate the outage probability constraints, and hence, the corresponding LTATs are not plotted in Fig. \ref{fig:ver_rate_comp}. Particularly, the outage probabilities $\mathcal O_{K,o_1}$ corresponding to $K=4$ for `No Inter.' and `No Corr.', respectively, are $0.2$ and $0.05$, which greatly exceed the outage constraint $\varepsilon=0.01$. Hence, totally ignoring the interference or just ignoring the interference correlation lead to an infeasible network design by violating the network operational constraints.  
To summarize, Fig. \ref{fig:ver_rate_comp} reveals the superior performance of the proposed interference aware design under the assumption of statistical CSI available at transmitter. It is worth noting that the same conclusion holds true if perfect CSI is known at transmitter \cite{ding2015cooperative}.
\begin{figure}
  \centering
  \includegraphics[width=3in,height=2.4in]{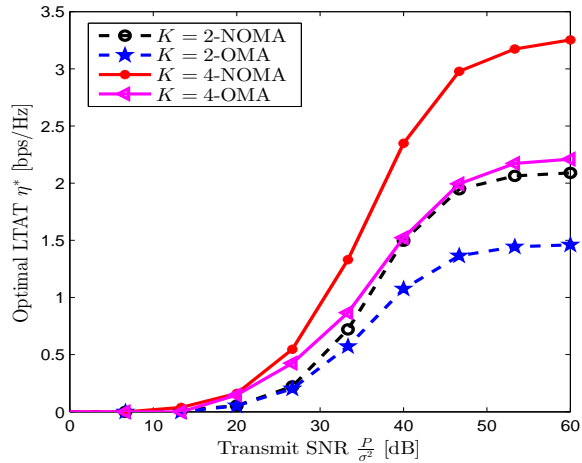}
  \caption{The maximal LTAT via the optimal rate selection with $\beta^2=0.3$ and $\varepsilon=0.01$.}\label{fig:ver_rate_comp}
\end{figure}

Furthermore, power allocation coefficient $\beta^2$ can also be optimally selected to maximize the LTAT given the desired transmission rates. However, it should be noticed that the joint optimal rate and power allocation for LTAT maximization may result in less or no information (or power) delivered (or allocated) to the far user with poor channel condition. Indeed, this is not beyond our expectation when user fairness (e.g., target transmission rate for each user) is not considered. Without fairness constraint, the joint optimization of the power and rate would aggressively allocate most of power to the user with better channel condition, which behaves like waterfilling algorithm regardless of user fairness \cite{ji2015power} and violates the intention of NOMA protocol \cite{timotheou2015fairness,cui2016novel,yang2016general,islam2016power}. This interesting phenomenon can be observed in Table \ref{tab:joint_power_rate} for the proposed scheme, where the notation ``$-$'' denotes no feasible solution.
Without any exceptions, the conclusion is also applicable to the OMA scheme. For further illustration, the joint power and rate optimization of the OMA system with $K=1$ is examined as an example in the following remark.
\begin{remark}\label{rem:joint_power}
For joint power and rate optimization of the OMA scheme to maximize the throughput with $K=1$ and $\varepsilon_1=\varepsilon_2=\varepsilon$, it is proved in Appendix \ref{app:joint_power} that no power would be allocated to convey information to the far user with worse channel condition, and the optimal transmission rate for the far user is zero, i.e., ${\beta^*}^2=1$ and ${R_2}^*=0$bps/Hz.
\end{remark}
\begin{table}[h!]
\centering
\caption{The optimal ${\beta}^2$ under joint power and rate optimization for $K=2$.}
\begin{tabular}{c||ccc}
  &\multicolumn{3}{c}{Transmit SNR $\frac{P}{\sigma^2}$}\\
  \hline
   Outage Tolerance $ \varepsilon$&0{dB} &30{dB}     & 60{dB}\\
   \hline
   $0.1$&-&1.0000 &0.9999\\
   $0.01$&-&1.0000 &1.0000\\
\end{tabular}
\label{tab:joint_power_rate}
\end{table}
\subsection{Maximization of ASE}
Aside from the LTAT, the ASE is another useful metric to characterize the performance of the whole D2D network \cite{andrews2010primer}. 
Specifically, the ASE of the D2D network is given by
\begin{equation}\label{eqn:network_througput}
\Delta  = \lambda \eta .
\end{equation}
Inspired by (\ref{eqn:opt_prob_simp}), the intensity of D2D transmitters can also be jointly designed to maximize the ASE, such that
\begin{equation}\label{eqn:network_throghput_simp}
\begin{array}{*{20}{cl}}
{\mathop {\rm maximize}\limits_{R_1, R_2,\lambda} }&{\Delta}\\
{{\rm{subject}}\,{\rm{to}}}&{\mathcal{O}_{K,o_i} \le \varepsilon_i},\, i= 1,2\\
{}&{0 \le \beta^2 < 2^{-R_2}}.\\
\end{array}
\end{equation}

In Fig. \ref{fig:net_rate_comp}, the optimal ASE is plotted against the transmit SNR via optimal design of transmission rates and intensity. It is observed in Fig. \ref{fig:net_rate_comp} that increasing $K$ and relaxing $\varepsilon$ could significantly improve the optimal ASE. Moreover, it is intuitive that increasing the transmit SNR will enhance the optimal ASE. However, the gain turns out to be negligible in high SNR regime. This is because  increasing the transmit SNR not only improves the received SNR but also boosts the interference, and consequently SINR does not vary. Similar to the LTAT scenario, ASE maximization based on `No Corr.' model would violate the outage probability constraints, and hence, lead to an infeasible solution.

\begin{figure}
  \centering
  \includegraphics[width=3in,height=2.4in]{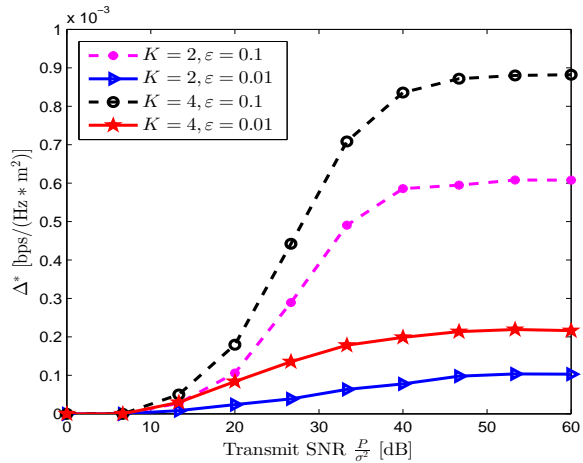}
  \caption{The maximal ASE via optimizing the transmission rates and the intensity.}\label{fig:net_rate_comp}
\end{figure}

\section{Conclusion}\label{sec:cond}

This paper has utilized stochastic geometry to develop an interference-aware mathematical model for cooperative HARQ assisted NOMA in D2D networks. Particularly, by modeling the spatial locations of the interfering devices using a PPP, tractable exact expressions for the long term average throughput (LTAT) and outage probability of a two user NOMA scenario have been derived. The developed model accounts for the spatial and temporal correlation of the interference at the two NOMA users and across the HARQ rounds. It has been shown that the decoding performance at the two receivers are interwoven and that the temporal and spatial correlation negatively influences the NOMA performance. To this end, an accurate analytical approximation for the LTAT has been proposed to enable interference aware optimal network design. Numerical results have shown that the proposed cooperative NOMA scheme decreases the outage probability by up to $32$\% compared to the non-cooperative case. Additionally, the optimized LTAT of proposed scheme outperforms that of the OMA scheme by $47$\%. It has also been shown that interference-oblivious rate selection results in violating the network outage constraints. Finally, optimal ASE has been presented to maximize the overall network performance.

\appendices
\section{Proof of (\ref{eqn:out_in_ex_manifin})}\label{app:outage_o1_1_f}
Putting (\ref{eqn:sinr_1_2}), (\ref{eqn:sinr_1_1}) and (\ref{eqn:sinr_phase2_o1}) into (\ref{eqn:outage_o1_1}), and then rearranging it yields
\begin{align}\label{eqn:out_o11p_simp}
&\mathbb P\left[{{{\bar \Theta }_{{o_1},1}},{\Theta _{{o_1},2,l}},{\Theta _{{o_2},k}}} \right]
= \mathbb P \left[ {\begin{array}{*{20}{c}}
\begin{array}{l}
\bigcap\limits_{j = 1}^{l - 1} \frac{{P{{\left| {{h_{z{o_1},j}}} \right|}^2}}}{{{I_{1,j}} + {\sigma ^2}}} < {U_{{o_1},2}},
{U_{{o_1},2}} \le \frac{{P{{\left| {{h_{z{o_1},l}}} \right|}^2}}}{{{I_{1,l}} + {\sigma ^2}}} < {U_{{o_1},1,I}} ,\\
\bigcap\limits_{j = l + 1}^k {\frac{{P{{\left| {{h_{z{o_1},j}}} \right|}^2}}}{{{I_{1,j}} + {\sigma ^2}}} < {U_{{o_1},1,I}}} ,
\bigcap\limits_{j = k + 1}^K {\frac{{P{{\left| {{h_{z{o_1},j}}} \right|}^2}}}{{{I_{1,j}} + {\sigma ^2}}} < {U_{{o_1},II}}},
\end{array}\\
{ \bigcap\limits_{j = 1}^{k - 1} {\frac{{P{{\left| {{h_{z{o_2},j}}} \right|}^2}}}{{{I_{2,j}} + {\sigma ^2}}} < {U_{{o_2},I}}} ,\frac{{P{{\left| {{h_{z{o_2},k}}} \right|}^2}}}{{{I_{2,k}} + {\sigma ^2}}} \ge {U_{{o_2},I}}}
\end{array}} \right],
\end{align}
where ${U_{{o_1},1,I}} = \frac{{{2^{{R_1}}} - 1}}{{{\beta ^2}\ell \left( {{d_1}} \right)}}$, ${U_{{o_1},2}} = \frac{{{2^{{R_2}}} - 1}}{{\left( {1 - {2^{{R_2}}}{\beta ^2}} \right)\ell \left( {{d_1}} \right)}}$, ${U_{{o_2},I}} = \frac{{{2^{{R_2}}} - 1}}{{\left( {1 - {2^{{R_2}}}{\beta ^2}} \right)\ell \left( {{d_2}} \right)}}$ and ${U_{{o_1},II}} = \frac{{{2^{{R_1}}} - 1}}{{\ell \left( {{d_1}} \right)}}$. Herein, it should be mentioned that ${1 - {2^{{R_2}}}{\beta ^2}} > 0$, otherwise user 1 is unable to mitigate the NOMA interference ${\bf s}_2$. For simplicity, we define the following successful decoding events ${A_{1,j}} \triangleq \left\{ {\frac{{P{{\left| {{h_{z{o_1},j}}} \right|}^2}}}{{{I_{1,j}} + {\sigma ^2}}} \ge  {U_{{o_1},2}}} \right\}$, $ j \in [1,l]$;
${A_{2,j}} \triangleq \left\{ {\frac{{P{{\left| {{h_{z{o_1},j}}} \right|}^2}}}{{{I_{1,j}} + {\sigma ^2}}} \ge {U_{{o_1},1,I}}} \right\}$, $j \in [l,k]$;
${A_{3,j}} \triangleq \left\{ {\frac{{P{{\left| {{h_{z{o_1},j}}} \right|}^2}}}{{{I_{1,j}} + {\sigma ^2}}} \ge {U_{{o_1},II}}} \right\}$, $j \in [k+1,K]$;
${B_{1,j}} \triangleq \left\{ {\frac{{P{{\left| {{h_{z{o_2},j}}} \right|}^2}}}{{{I_{2,j}} + {\sigma ^2}}} \ge {U_{{o_2},I}}} \right\}$, $j \in [1,k]$.
%
%
%
Then (\ref{eqn:out_o11p_simp}) can be simplified as
\begin{align}\label{eqn:outo11p_rew}
&\mathbb P\left[ {{{\bar \Theta }_{{o_1},1}},{\Theta _{{o_1},2,l}} ,{\Theta _{{o_2},k}}} \right]  = \mathbb P\left[ \begin{array}{l}
\bigcap\limits_{j = 1}^{l - 1} {\left( {\Omega - {A_{1,j}}} \right)} ,{A_{1,l}} - {A_{2,l}},\bigcap\limits_{j = l + 1}^k {\left( {\Omega - {A_{2,j}}} \right)} ,\\
\bigcap\limits_{j = k + 1}^K {\left( {\Omega - {A_{3,j}}} \right)} ,\bigcap\limits_{j = 1}^{k - 1} {\left( {\Omega - {B_{1,j}}} \right)} ,{B_{1,k}}
\end{array} \right].
\end{align}
By applying inclusion-exclusion principle into the first term in the square bracket, (\ref{eqn:outo11p_rew}) can be rewritten as
\begin{equation}\label{eqn:out11_inc_exp00}
\mathbb P\left[ {{{\bar \Theta }_{{o_1},1}},{\Theta _{{o_1},2,l}} ,{\Theta _{{o_2},k}}} \right]
=\sum\limits_{{\tau _1} = 0}^{l - 1} {\sum\limits_{{\nu _1} \in {{\cal M}_{{\tau _1}}}} {{{\left( { - 1} \right)}^{{\tau _1}}}} }
\mathbb P\left[ {\begin{array}{*{20}{l}}
{\bigcap\limits_{v \in {\nu _1}} {{A_{1,v}}} ,{A_{1,l}} - {A_{2,l}},\bigcap\limits_{j = l + 1}^k {\left( {\Omega - {A_{2,j}}} \right)} ,}\\
{\bigcap\limits_{j = k + 1}^K {\left( {\Omega - {A_{3,j}}} \right)} ,\bigcap\limits_{j = 1}^{k - 1} {\left( {\Omega - {B_{1,j}}} \right)} ,{B_{1,k}}}
\end{array}} \right ],
\end{equation}
where $\mathcal M_{\tau_1}$ refers to the union of all the $\tau_1$-element subsets of the natural number set $\{1,2,\cdots,l-1\}$. Similarly, repeatedly using the same approach as (\ref{eqn:out11_inc_exp00}) leads to 
\begin{multline}\label{eqn:out11_inc_exp0}
\mathbb P\left[ {{{\bar \Theta }_{{o_1},1}},{\Theta _{{o_1},2,l}} ,{\Theta _{{o_2},k}}} \right]
= \sum\limits_{{\tau _1} = 0}^{l - 1} {\sum\limits_{{\tau _2} = 0}^{k - l} {\sum\limits_{{\tau _3} = 0}^{K - k} {\sum\limits_{{\tau _4} = 0}^{k - 1} {\sum\limits_{{\nu _1} \in {{\cal M}_{{\tau _1}}}} {\sum\limits_{{\nu _2} \in {{\cal M}_{{\tau _2}}}} {\sum\limits_{{\nu _3} \in {{\cal M}_{{\tau _3}}}} {\sum\limits_{{\nu _4} \in {{\cal M}_{{\tau _4}}}} {{{\left( { - 1} \right)}^{\sum\limits_{j = 1}^4 {{\tau _j}} }}} \times} } } } } } } \\
  \mathbb P\left[ {\bigcap\limits_{v \in {\nu _1}} {{A_{1,v}}} ,{A_{1,l}} - {A_{2,l}},\bigcap\limits_{v \in {\nu _2}} {{A_{2,v}}} ,\bigcap\limits_{v \in {\nu _3}} {{A_{3,v}}} ,\bigcap\limits_{v \in {\nu _4}} {{B_{1,v}}} ,{B_{1,k}}} \right],
\end{multline}
where $\mathcal M_{\tau_2}$, $\mathcal M_{\tau_3}$, $\mathcal M_{\tau_4}$ denote unions of all the subsets of natural number sets $\{l+1,\cdots,k\}$, $\{k+1,\cdots,K\}$ and $\{1,\cdots,k-1\}$ with cardinalities ${\tau_2}$, ${\tau_3}$ and ${\tau_4}$, respectively.

Noticing that ${{A_{2,l}}} \subset {A_{1,l}}$ if $U_{o_1,1,I} > U_{o_1,2}$, otherwise ${{A_{1,l}}} - {A_{2,l}} = \emptyset$. Thereafter, (\ref{eqn:out11_inc_exp0}) can be derived as
\begin{multline}\label{eqn:out_in_ex_mani0}
\mathbb P\left[ {{{\bar \Theta }_{{o_1},1}},{\Theta _{{o_1},2,l}} ,{\Theta _{{o_2},k}}} \right]  = \sum\limits_{{\tau _1} = 0}^{l - 1} {\sum\limits_{{\tau _2} = 0}^{k - l} {\sum\limits_{{\tau _3} = 0}^{K - k} {\sum\limits_{{\tau _4} = 0}^{k - 1} {\sum\limits_{{\nu _1} \in {\mathcal M_{{\tau _1}}}} {\sum\limits_{{\nu _2} \in {\mathcal M_{{\tau _2}}}} {\sum\limits_{{\nu _3} \in {\mathcal M_{{\tau _3}}}} {\sum\limits_{{\nu _4} \in {\mathcal M_{{\tau _4}}}} {{{\left( { - 1} \right)}^{\sum\limits_{j = 1}^4 {{\tau _j}} }}} \times} } } } } } } \\
 {\left[ \begin{array}{l}
\mathbb P\left[{\bigcap\limits_{v \in {\nu _1}} {{A_{1,v}}} ,{A_{1,l}},\bigcap\limits_{v \in {\nu _2}} {{A_{2,v}}} ,\bigcap\limits_{v \in {\nu _3}} {{A_{3,v}}} ,\bigcap\limits_{v \in {\nu _4}} {{B_{1,v}}} ,{B_{1,k}}} \right]\\
 - \mathbb P\left[ {\bigcap\limits_{v \in {\nu _1}} {{A_{1,v}}} ,{A_{2,l}},\bigcap\limits_{v \in {\nu _2}} {{A_{2,v}}} ,\bigcap\limits_{v \in {\nu _3}} {{A_{3,v}}} ,\bigcap\limits_{v \in {\nu _4}} {{B_{1,v}}} ,{B_{1,k}}} \right]
\end{array} \right]^ + }.
\end{multline}

It follows from (\ref{eqn:out_in_ex_mani0}) that all the terms in square brackets have the same sign no matter $U_{o_1,1,I} > U_{o_1,2}$ or not. 
Hence, (\ref{eqn:out_in_ex_mani0}) can be rewritten as
\begin{equation}\label{eqn:out_in_ex_mani}
\mathbb P\left[ {{{\bar \Theta }_{{o_1},1}},{\Theta _{{o_1},2,l}} ,{\Theta _{{o_2},k}}} \right]   =
{\left[ \begin{array}{l}
\sum\limits_{{\tau _1} = 0}^{l - 1} {\sum\limits_{{\tau _2} = 0}^{k - l} {\sum\limits_{{\tau _3} = 0}^{K - k} {\sum\limits_{{\tau _4} = 0}^{k - 1} {\sum\limits_{{\nu _1} \in {{\cal M}_{{\tau _1}}}} {\sum\limits_{{\nu _2} \in {{\cal M}_{{\tau _2}}}} {\sum\limits_{{\nu _3} \in {{\cal M}_{{\tau _3}}}} {\sum\limits_{{\nu _4} \in {{\cal M}_{{\tau _4}}}} {{{\left( { - 1} \right)}^{\sum\limits_{j = 1}^4 {{\tau _j}} }}}  \times } } } } } } } \\
\left( {\begin{array}{*{20}{l}}
\mathbb P{\left[ {\bigcap\limits_{v \in {\nu _1}} {{A_{1,v}}} ,{A_{1,l}},\bigcap\limits_{v \in {\nu _2}} {{A_{2,v}}} ,\bigcap\limits_{v \in {\nu _3}} {{A_{3,v}}} ,\bigcap\limits_{v \in {\nu _4}} {{B_{1,v}}} ,{B_{1,k}}} \right]}\\
{ - \mathbb P\left[ {\bigcap\limits_{v \in {\nu _1}} {{A_{1,v}}} ,{A_{2,l}},\bigcap\limits_{v \in {\nu _2}} {{A_{2,v}}} ,\bigcap\limits_{v \in {\nu _3}} {{A_{3,v}}} ,\bigcap\limits_{v \in {\nu _4}} {{B_{1,v}}} ,{B_{1,k}}} \right]}
\end{array}} \right)
\end{array} \right]^ + },
\end{equation}

Noticing the inner probability terms $\mathbb P(\cdot)$ are independent of $\nu_1$, $\nu_2$, $\nu_3$ and $\nu_4$, the cardinalities of set $\mathcal M_{\tau_1}, \mathcal M_{\tau_2}, \mathcal M_{\tau_3}$ and $\mathcal M_{\tau_4}$ are given by $C_{l - 1}^{{\tau _1}}$, $C_{k-l}^{{\tau _2}}$, $C_{K - k}^{{\tau _3}}$ and $C_{k - 1}^{{\tau _4}}$, respectively. Accordingly, (\ref{eqn:out_in_ex_mani}) can be simplified as (\ref{eqn:out_in_ex_manifin}), wherein $\Psi ({\bf{U}},\bs \tau ;\hat {\bf{U}},\hat {\bs \tau} )$ is defined for notational convenience as
\begin{equation}\label{eqn:genr_decoding_scu}
\Psi ({\bf{U}},\bs \tau ;\hat {\bf{U}},\hat {\bs \tau} ) \triangleq \mathbb P\left[ {\bigcap\limits_{n = 1}^N {\bigcap\limits_{k = 1}^{\tau_n } {{\mathcal A_{n,k}}} } ,\bigcap\limits_{n = 1}^M {\bigcap\limits_{k =  1}^{\hat \tau _n } {{\mathcal B_{n,k}}} } } \right],
\end{equation}
where ${\bf U} = (U_1,\cdots,U_N)$, ${\bs \tau} = (\tau_1,\cdots,\tau_N)$, $\hat{\bf U} = (\hat U_1,\cdots,\hat U_M)$, $\hat{\bs \tau} = (\hat \tau_1,\cdots,\hat \tau_M)$, ${\mathcal A_{n,k}} \triangleq \left\{ \frac{{P{{\left| {{h_{k_n}}} \right|}^2}}}{{{I'_{1,{k_n}}} + {\sigma ^2}}} \ge {U_n}\right\} $, ${\mathcal B_{n,k}} \triangleq \left\{ \frac{{P{{\left| {{{\hat h}_{k_n}}} \right|}^2}}}{{{I'_{2,{k_n}}} + {\sigma ^2}}} \ge {{\hat U}_n}\right\} $, ${I'_{i,k_n}} = P\sum\nolimits_{x \in \Phi \backslash \left\{ z \right\}} {\ell \left( {\left\| {x - {o_i}} \right\|} \right){{\left| {{h'_{xo_i,k_n}}} \right|}^2}}$ and ${k_n} = \sum\nolimits_{\iota  = 1}^{n - 1} {{\tau _\iota }}  + k$, the channel amplitudes $\left\{\left|h_k\right|,\, k=[1,\cdots,{\sum\limits_{\iota  = 1}^N \tau_\iota}]\right\}$, $\left\{\left|\hat h_k\right|,\, k=[1,\cdots,{\sum\limits_{\iota  = 1}^M \hat \tau_\iota}]\right\}$,  $\left\{\left|{h'_{xo_1,k}}\right|,\, k=[1,\cdots,{\sum\limits_{\iota  = 1}^N \tau_\iota}]\right\}$ and $\left\{\left|{h'_{xo_2,k}}\right|,\, k=[1,\cdots,{\sum\limits_{\iota  = 1}^M \hat \tau_\iota}]\right\}$
follow independent Rayleigh distributions with unit average power.


Since fading channels follow independent Rayleigh distribution given the interferences $I'_{i,k}$, $\Psi ({\bf{U}},\bs \tau ;\hat {\bf{U}},\hat {\bs \tau} )$ can be derived as 
\begin{align}\label{eqn:Psi_de}
 \Psi ({\bf{U}},\bs \tau ;\hat {\bf{U}},\hat {\bs \tau} )& ={\mathbb E_{{I'_{i,k}}}} \left(\mathbb P\left[ {\bigcap\limits_{n = 1}^N {\bigcap\limits_{k = 1}^{{\tau _n }} {{{\left| {{h_{k_n}}} \right|}^2} \ge \frac{{{U_n}\left( {{I'_{1,k_n}} + {\sigma ^2}} \right)}}{P}} } ,\bigcap\limits_{n = 1}^M {\bigcap\limits_{k = 1}^{{{\hat \tau _n }} } {{{\left| {{{\hat h}_{k_n}}} \right|}^2} \ge \frac{{{{\hat U}_n}\left( {{I'_{2,k_n}} + {\sigma ^2}} \right)}}{P}} } } \right]\right)\notag\\
   & = {e^{ - \frac{{{\sigma ^2}}}{P}
   \left({\bf U}^{\rm T}{\bs \tau} + {{\hat {\bf U}}^{\rm T}{\hat {\bs \tau} }}\right)
   }}{\mathbb E_{{I'_{i,k_n}}}}\left( {\prod\limits_{n = 1}^N {\prod\limits_{k = 1}^{{{\tau _n }} } {{e^{ - \frac{{{U_n}{I'_{1,k_n}}}}{P}}}} } \prod\limits_{n = 1}^M {\prod\limits_{k = 1}^{ {{\hat \tau _n }} } {{e^{ - \frac{{{{\hat U}_n}{I'_{2,k_n}}}}{P}}}} } } \right).
\end{align}
Then plugging (\ref{eqn:inter_fer_1}) into (\ref{eqn:Psi_de}) leads to
\begin{align}\label{eqn:psi_de_sub}
&\Psi ({\bf{U}},\bs \tau ;\hat {\bf{U}},\hat {\bs \tau} ) = {e^{ - \frac{{{\sigma ^2}}}{P}\left( {{{\bf{U}}^{\rm{T}}}\bs \tau  + {{\hat {\bf{U}}}^{\rm{T}}}\hat {\bs\tau} } \right)}}{\mathbb E_{{I'_{i,k}}}}\left( \begin{array}{l}
\prod\limits_{n = 1}^N {\prod\limits_{k = 1}^{{\tau _n}} {{e^{ - {U_n}{\sum _{x \in \Phi \backslash \left\{ z \right\}}}\ell \left( {\left\| {x - {o_1}} \right\|} \right){{\left| {{h'_{x{o_1},{k_n}}}} \right|}^2}}}} }  \\
\times\prod\limits_{n = 1}^M {\prod\limits_{k = 1}^{{{\hat \tau }_n}} {{e^{ - {{\hat U}_n}{\sum _{x \in \Phi \backslash \left\{ z \right\}}}\ell \left( {\left\| {x - {o_2}} \right\|} \right){{\left| {{h'_{x{o_2},{k_n}}}} \right|}^2}}}} }
\end{array}\right).
\end{align}
Given the PPP $\Phi$ and noticing the independence of fading channels, (\ref{eqn:psi_de_sub}) can be further written as 
\begin{equation}\label{eqn:psi_de_sub_split}
\Psi ({\bf{U}},\bs \tau ;\hat {\bf{U}},\hat {\bs \tau} ) = {e^{ - \frac{{{\sigma ^2}}}{P}\left( {{{\bf{U}}^{\rm{T}}}\bs \tau  + {{\hat {\bf{U}}}^{\rm{T}}}\hat {\bs\tau} } \right)}} \mathbb E_z^!\left( \prod\limits_{x \in \Phi \backslash \left\{ z \right\}} {{ {{\left( \begin{array}{l}
\prod\limits_{n = 1}^N {\prod\limits_{k = 1}^{{{\tau _n }} } {{\mathbb E_{\left| {{h'_{x{o_1},k_n}}} \right|}}\left( {{e^{ - {U_n}\ell \left( {\left\| {x - {o_1}} \right\|} \right){{\left| {{h'_{x{o_1},k_n}}} \right|}^2}}}} \right)} } \\
\prod\limits_{n = 1}^M {\prod\limits_{k = 1}^{ {{\hat \tau _n }} } {{\mathbb E_{\left| {{h'_{x{o_2},k_n}}} \right|}}\left( {{e^{ - {{\hat U}_n}\ell \left( {\left\| {x - {o_2}} \right\|} \right){{\left| {{h'_{x{o_2},k_n}}} \right|}^2}}}} \right)} }
\end{array} \right)}} }}  \right).
\end{equation}
where $\mathbb E_z^!$ denotes the expectation taken against the reduced Palm distribution of the PPP $\Phi$. Averaging over channel coefficients of Rayleigh distribution yields
\begin{equation}\label{eqn:psi_de_sub_splitavg}
\Psi ({\bf{U}},\bs \tau ;\hat {\bf{U}},\hat {\bs \tau} ) = {e^{ - \frac{{{\sigma ^2}}}{P}\left( {{{\bf{U}}^{\rm{T}}}\bs \tau  + {{\hat {\bf{U}}}^{\rm{T}}}\hat {\bs\tau} } \right)}}
 \mathbb E_z^!\left( {\begin{array}{l}
{\prod \limits_{x \in \Phi \backslash \left\{ z \right\}}}\prod\limits_{n = 1}^N {\frac{1}{{{{\left( {1 + {U_n}\ell \left( {\left\| {x - {o_1}} \right\|} \right)} \right)}^{{\tau _n}}}}}}
\prod\limits_{n = 1}^M {\frac{1}{{{{\left( {1 + {{\hat U}_n}\ell \left( {\left\| {x - {o_2}} \right\|} \right)} \right)}^{{{\hat \tau }_n}}}}}}
\end{array}} \right).
\end{equation}

It follows by using Slivnyak theorem and the Laplace functional of PPPs that \cite{haenggi2012stochastic}
\begin{equation}\label{eqn:psi_de_sub_splitavgsl}
\Psi ({\bf{U}},\bs \tau ;\hat {\bf{U}},\hat {\bs \tau} ) = {e^{ - \frac{{{\sigma ^2}}}{P}\left( {{{\bf{U}}^{\rm{T}}}\bs \tau  + {{\hat {\bf{U}}}^{\rm{T}}}\hat {\bs\tau} } \right)}}
 {{e^{ - \int\limits_{{\mathbb R^2}} {\left( {1 - \prod\limits_{n = 1}^N {\frac{1}{{{{\left( {1 + {U_n}\ell \left( {\left\| {x - {o_1}} \right\|} \right)} \right)}^{{\tau _n}}}}}} \prod\limits_{n = 1}^M {\frac{1}{{{{\left( {1 + {{\hat U}_n}\ell \left( {\left\| {x - {o_2}} \right\|} \right)} \right)}^{{{\hat \tau }_n}}}}}} } \right)\mathbb P_z^!\left[ {dx} \right]} }}},
\end{equation}
where $\mathbb P_z^!$ denotes the reduced Palm distribution of the PPP $\Phi$ with intensity $\lambda$. By making the change of variables and after some algebraic manipulations, (\ref{eqn:psi_de_sub_splitavgsl}) can be derived as (\ref{eqn:genr_decoding_scu1}).
\section{Proof of (\ref{eqn:varphi_0})}\label{app:proof_o1eqo2}
For small $D$, $\varphi ({\bf{U}},{\bs \tau} ;{\bf{\hat U}},\hat {\bs \tau} )$ in (\ref{eqn:varphi_given}) can be approximated by setting $o_1=o_2$ as 
\begin{align}\label{eqn:varphi_11}
\varphi ({\bf{U}},{\bs \tau} ;{\bf{\hat U}},\hat {\bs \tau} ) \approx \varphi (\tilde{\bf{U}},\tilde{\bs \tau};{\bf{0}},{\bf{0}} ),
\end{align}
where $\tilde{\bf U} = ({\bf U},\hat{\bf U})=(\tilde U_1,\cdots,\tilde U_{N+M})$ and $\tilde{\bs \tau} = ({\bs \tau},\hat{\bs \tau})=(\tilde \tau_1,\cdots,\tilde \tau_{N+M})$.
By applying polar coordinates together with (\ref{eqn:varphi_given}), $\varphi (\tilde{\bf{U}},\tilde{\bs \tau};{\bf{0}},{\bf{0}} )$ can be rewritten as
\begin{equation}\label{eqn:varphi_1}
\varphi (\tilde{\bf{U}},\tilde{\bs \tau};{\bf{0}},{\bf{0}} ) = \int\limits_0^\infty  {\int\limits_0^{2\pi } {\left( {1 - \prod\limits_{\iota  = 1}^{N+M} {{{\left( {1 + \frac{{{\tilde U_\iota }}}{{{r^\alpha }}}} \right)}^{ - {\tilde \tau _\iota }}}} } \right)rd\theta dr} }
 = 2\pi \int\limits_0^\infty  {\left( {1 - \prod\limits_{\iota  = 1}^{N+M} {{{\left( {1 + \frac{{{\tilde U_\iota }}}{{{r^\alpha }}}} \right)}^{ - {\tilde \tau _\iota }}}} } \right)rdr}.
\end{equation}
By using integration by parts, (\ref{eqn:varphi_1}) can then be derived as
\begin{multline}\label{eqn:integral_p_intbyparts}
\varphi (\tilde{\bf{U}},\tilde{\bs \tau};{\bf{0}},{\bf{0}} )  = \pi \int\limits_0^\infty  {\left( {1 - \prod\limits_{\iota  = 1}^{N+M} {{{\left( {1 + \frac{{{\tilde U_\iota }}}{{{r^\alpha }}}} \right)}^{ - {\tilde \tau_\iota }}}} } \right)d{r^2}} \\
=\pi\sum\limits_{\kappa  = 1}^{N+M} {{{{\tilde \tau _\kappa }\alpha {\tilde U_\kappa }}} }
 \int\limits_0^\infty  {{r^{1 - \alpha }}{{\left( {1 + \frac{{{\tilde U_\kappa }}}{{{r^\alpha }}}} \right)}^{ - {\tilde \tau _\kappa } - 1}}\prod\limits_{\scriptstyle\iota  = 1\hfill\atop
\scriptstyle\iota  \ne \kappa \hfill}^{N+M} {{{\left( {1 + \frac{{{\tilde U_\iota }}}{{{r^\alpha }}}} \right)}^{ - {\tilde \tau _\iota }}}} dr}\\
 =\pi \sum\limits_{\kappa  = 1}^{N+M} {{{{\tilde \tau _\kappa }\alpha {\tilde U_\kappa }}}\int\limits_0^\infty  {{r^{1 - \alpha }}\prod\limits_{\iota  = 1}^{N+M} {{{\left( {1 + \frac{{{\tilde U_\iota }}}{{{r^\alpha }}}} \right)}^{ - {\tilde \tau _\iota } - {\delta _{\iota  - \kappa }}}}} dr} },
\end{multline}
where ${{\delta }_s}$ denotes Dirac function. By making a change of variable $s=r^\alpha$, we have
\begin{equation}\label{eqn:integral_change_r2s}
\varphi (\tilde{\bf{U}},\tilde{\bs \tau};{\bf{0}},{\bf{0}} ) =\pi\sum\limits_{\kappa  = 1}^{N+M} {{{{\tilde \tau _\kappa }{\tilde U_\kappa }}}}
\int\limits_0^\infty  {{s^{\sum\limits_{\iota  = 1}^{N+M} {{\tilde \tau _\iota }}  + \frac{2}{\alpha } - 1}}\prod\limits_{\iota  = 1}^{N+M} {{{\left( {s + {\tilde U_\iota }} \right)}^{ - {\tilde \tau _\iota } - {\delta _{\iota  - \kappa }}}}} ds}.
\end{equation}

Assuming that ${\tilde U_\mu } = \max \left\{ {{\tilde U_1}, \cdots ,{\tilde U_{N+M}}} \right\}$ and by the change of variable $z = \frac{{{\tilde U_\mu }}}{{s + {\tilde U_\mu }}}$, (\ref{eqn:integral_change_r2s}) can be rewritten as
\begin{equation}\label{eqn:integral_changeofvari_hy}
\varphi (\tilde{\bf{U}},\tilde{\bs \tau};{\bf{0}},{\bf{0}} )
 = \pi\sum\limits_{\kappa  = 1}^{N+M} {{{{\tilde \tau _\kappa }{\tilde U_\kappa }{\tilde U_\mu }^{\frac{2}{\alpha } - 1}}} }\int\limits_0^1 {{z^{ - \frac{2}{\alpha }}}{{\left( {1 - z} \right)}^{\sum\limits_{\iota  = 1}^{N+M} {{\tilde \tau _\iota }}  + \frac{2}{\alpha } - 1}}}\prod\limits_{\scriptstyle\iota  = 1\hfill\atop
\scriptstyle\iota  \ne \mu \hfill}^{N+M} {{{\left( {1 - \left( {1 - \frac{{{\tilde U_\iota }}}{{{\tilde U_\mu }}}} \right)z} \right)}^{ - {\tilde \tau _\iota } - {\delta _{\iota  - \kappa }}}}} dz.
\end{equation}
With the definition of the fourth kind of Lauricella function $F_D^{(N)}(\cdot)$ in \cite[Eq. A.52]{mathai2009h}, (\ref{eqn:integral_changeofvari_hy}) can finally be expressed in terms of Lauricella function as (\ref{eqn:varphi_0}).

\section{LTAT of HARQ assisted OMA Scheme}\label{app:ltat_tdma}
With (\ref{eqn:noma_harq_throughput}), the LTAT of HARQ assisted OMA scheme is expressed as 
\begin{equation}\label{eqn:throughput_tdma}
\eta_{\rm{OMA}}  = \frac{{{R_1}\left( {1 - {\mathcal O_{{\rm{OMA}},K,{o_1}}}} \right) + {R_2}\left( {1 - {\mathcal O_{{\rm{OMA}},K,{o_2}}}} \right)}}{1+{\sum\nolimits_{\kappa  = 1}^{K - 1} {\left( {{\mathcal O_{{\rm{OMA}},\kappa ,{o_1}}} + {\mathcal O_{{\rm{OMA}},\kappa ,{o_2}}} - {\mathcal O_{{\rm{OMA}},\kappa ,{o_1},{o_2}}}} \right)} }},
\end{equation}
where ${\mathcal O_{{\rm{OMA}},K,{o_1}}}$, ${\mathcal O_{{\rm{OMA}},K,{o_2}}}$ and ${\mathcal O_{{\rm OMA},K,{o_1},{o_2}}}$ are respectively given by
\begin{align}\label{eqn:outage_1_tdma}
{\mathcal O_{{\rm{OMA}},K,{o_1}}} &= \mathbb P\left[ {{{\bar \Xi }_{{o_1}}}} \right]  = \mathbb P\left[ {{{\bar \Xi }_{{o_1}}},\left( {\bigcup\limits_{k = 1}^K {{\Xi _{{o_2},k}}} } \right)\bigcup {{{\bar \Xi }_{{o_2}}}} } \right]= \sum\limits_{k = 1}^K {\mathbb P\left[ {{{\bar \Xi }_{{o_1}}},{\Xi _{{o_2},k}}} \right]}  + \mathbb P\left[ {{{\bar \Xi }_{{o_1}}},{{\bar \Xi }_{{o_2}}}} \right],
\end{align}

\begin{align}\label{eqn:outage_2_tdma}
{\mathcal O_{{\rm{OMA}},K,{o_2}}} &= \mathbb P[{{\bar \Xi }_{{o_2}}}] = \mathbb P\left[ {\left( {\bigcup\limits_{l = 1}^K {{\Xi _{{o_1},l}}} } \right)\bigcup {{{\bar \Xi }_{{o_1}}}} ,{{\bar \Xi }_{{o_2}}}} \right] = \sum\limits_{l = 1}^K {\mathbb P\left[{{\Xi _{{o_1},l}} ,{{\bar \Xi }_{{o_2}}}} \right]}  + \mathbb P\left[ {{{\bar \Xi }_{{o_1}}},{{\bar \Xi }_{{o_2}}}} \right],
\end{align}
\begin{equation}\label{eqn:out12_tdma}
{\mathcal O_{{\rm OMA},K,{o_1},{o_2}}} = \mathbb P\left[ {{{\bar \Xi }_{{o_1}}},{{\bar \Xi }_{{o_2}}}} \right].
\end{equation}
Herein, ${{\bar \Xi }_{{o_i}}} $ denotes the outage event at user $i$ after $K$ HARQ rounds and ${\Xi _{{o_i},k}}$ represents the successful decoding event at user $i$ after $k$ HARQ rounds. To proceed, $\mathbb P\left[ {{{\bar \Xi }_{{o_1}}},{\Xi _{{o_2},k}}} \right]$, $\mathbb P\left[ {{\Xi _{{o_1},l}},{{\bar \Xi }_{{o_2}}}} \right]$ and $ \mathbb P\left[ {{{\bar \Xi }_{{o_1}}},{{\bar \Xi }_{{o_2}}}} \right]$ will be derived one by one.

From information-theoretical perspective, $\mathbb P\left[ {{{\bar \Xi }_{{o_1}}},{\Xi _{{o_2},k}}} \right]$ is given by 
\begin{equation}\label{eqn:tdma_out11}
\mathbb P\left[ {{{\bar \Xi }_{{o_1}}},{\Xi _{{o_2},k}}} \right]
 = \mathbb P\left[ \begin{array}{l}
\bigcap\limits_{j = 1}^k {{\beta ^2}\mathcal I\left( {\frac{{P{{\left| {{h_{z{o_1},j}}} \right|}^2}\ell \left( {{d_1}} \right)}}{{{I_{1,j}} + {\sigma ^2}}}} \right) < {R_1}} ,\bigcap\limits_{j = k + 1}^K {\mathcal I\left( {\frac{{P{{\left| {{h_{z{o_1},j}}} \right|}^2}\ell \left( {{d_1}} \right)}}{{{I_{1,j}} + {\sigma ^2}}}} \right) < {R_1},} \\
\bigcap\limits_{j = 1}^{k - 1} {\left( {1 - {\beta ^2}} \right)\mathcal I\left( {\frac{{P{{\left| {{h_{z{o_2},j}}} \right|}^2}\ell \left( {{d_2}} \right)}}{{{I_{2,j}} + {\sigma ^2}}}} \right) < {R_2}} ,\left( {1 - {\beta ^2}} \right)\mathcal I\left( {\frac{{P{{\left| {{h_{z{o_2},k}}} \right|}^2}\ell \left( {{d_2}} \right)}}{{{I_{2,k}} + {\sigma ^2}}}} \right) \ge {R_2}
\end{array} \right].
\end{equation}
Similar to Appendix \ref{app:outage_o1_1_f}, $\mathbb P\left( {{{\bar \Xi }_{{o_1}}},{\Xi _{{o_2},k}}} \right)$ can be derived as
\begin{align}\label{eqn:tdma_out11_fin}
&\mathbb P\left[ {{{\bar \Xi }_{{o_1}}},{\Xi _{{o_2},k}} } \right] =\sum\limits_{{\tau _1} = 0}^k {\sum\limits_{{\tau _2} = 0}^{K - k} {\sum\limits_{{\tau _3} = 0}^{k - 1} {{{\left( { - 1} \right)}^{\sum\limits_{j = 1}^3 {{\tau _j}} }}C_k^{{\tau _1}}C_{K - k}^{{\tau _2}}C_{k - 1}^{{\tau _3}}} \times} }  \notag\\
 & \Psi \left( {\left( {\frac{{{2^{\frac{{{R_1}}}{{{\beta ^2}}}}} - 1}}{{\ell \left( {{d_1}} \right)}},\frac{{{2^{{R_1}}} - 1}}{{\ell \left( {{d_1}} \right)}}} \right),\left( {{\tau _1},{\tau _2}} \right);\frac{{{2^{\frac{{{R_2}}}{{1 - {\beta ^2}}}}} - 1}}{{\ell \left( {{d_2}} \right)}},{\tau _3} + 1} \right).
\end{align}

With the same approach, we can prove
\begin{align}\label{eqn:tdma_outlo2}
&\mathbb P\left[ {{\Xi _{{o_1},l}},{{\bar \Xi }_{{o_2}}}} \right] = \sum\limits_{{\tau _1} = 0}^{l - 1} {\sum\limits_{{\tau _2} = 0}^l {\sum\limits_{{\tau _3} = 0}^{K - l} {{{\left( { - 1} \right)}^{\sum\limits_{j = 1}^3 {{\tau _j}} }}C_{l - 1}^{{\tau _1}}C_l^{{\tau _2}}C_{K - l}^{{\tau _3}}} \times} } \notag\\
&  \Psi \left( {\frac{{{2^{\frac{{{R_1}}}{{{\beta ^2}}}}} - 1}}{{\ell \left( {{d_1}} \right)}},{\tau _1} + 1;\left( {\frac{{{2^{\frac{{{R_2}}}{{1 - {\beta ^2}}}}} - 1}}{{\ell \left( {{d_2}} \right)}},\frac{{{2^{{R_2}}} - 1}}{{\ell \left( {{d_2}} \right)}}} \right),\left( {{\tau _2},{\tau _3}} \right)} \right).
\end{align}

Analogously, $\mathbb P\left( {{{\bar \Xi }_{{o_1}}},{{\bar \Xi }_{{o_2}}}} \right)$ follows as
\begin{equation}\label{eqn:tdma_out12_def}
 \mathbb P\left[ {{{\bar \Xi }_{{o_1}}},{{\bar \Xi }_{{o_2}}}} \right] = \mathbb P\left[\begin{array}{l}
\bigcap\limits_{j = 1}^K {\left( {{\beta ^2}\mathcal I\left( {\frac{{P{{\left| {{h_{z{o_1},j}}} \right|}^2}\ell \left( {{d_1}} \right)}}{{{I_{1,j}} + {\sigma ^2}}}} \right) < {R_1}} \right)} ,
\bigcap\limits_{j = 1}^K {\left( {\left( {1 - {\beta ^2}} \right)\mathcal I\left( {\frac{{P{{\left| {{h_{z{o_2},j}}} \right|}^2}\ell \left( {{d_2}} \right)}}{{{I_{2,j}} + {\sigma ^2}}}} \right) < {R_2}} \right)}
\end{array} \right].
\end{equation}
After some algebraic manipulations, it follows that
\begin{equation}\label{eqn:tdma_out12_deffina}
 \mathbb P\left[ {{{\bar \Xi }_{{o_1}}},{{\bar \Xi }_{{o_2}}}} \right] = \sum\limits_{{\tau _1} = 0}^K {\sum\limits_{{\tau _2} = 0}^K {{{\left( { - 1} \right)}^{\sum\limits_{j = 1}^2 {{\tau _j}} }}C_K^{{\tau _1}}C_K^{{\tau _2}}
 } } \Psi \left( {\frac{{{2^{\frac{{{R_1}}}{{{\beta ^2}}}}} - 1}}{{\ell \left( {{d_1}} \right)}},{\tau _1};\frac{{{2^{\frac{{{R_2}}}{{1 - {\beta ^2}}}}} - 1}}{{\ell \left( {{d_2}} \right)}},{\tau _2}} \right).
\end{equation}

\section{Proof of Remark \ref{rem:joint_power}}\label{app:joint_power}
Similar to (\ref{eqn:opt_prob_simp}), the problem of joint rate and power optimization for the HARQ assisted OMA scheme can be formulated as
\begin{equation}\label{eqn:opt_prob_simptdma}
\begin{array}{*{20}{cl}}
{\mathop {\rm maximize}\limits_{R_1, R_2,\beta^2} }&{\eta_{\rm{OMA}}}\\
{{\rm{subject}}\,{\rm{to}}}&{\mathcal O_{{\rm{OMA}},K,o_i} \le \varepsilon_i},\, i= 1,2\\
\end{array}
\end{equation}
Herein, we consider the case of $K=1$ and $\varepsilon_1=\varepsilon_2=\varepsilon$. According to (\ref{eqn:throughput_tdma}), (\ref{eqn:outage_1_tdma}) and (\ref{eqn:outage_2_tdma}), the LTAT and outage probabilities of the OMA scheme with $K=1$ are
\begin{align}\label{eqn:ltat_oma_k1}
 \eta _{{\rm{OMA}}}^{K = 1} &= {R_1}{e^{ - \frac{{{\sigma ^2}}}{P}\frac{{{2^{\frac{{{R_1}}}{{{\beta ^2}}}}} - 1}}{{\ell \left( {{d_1}} \right)}} - \pi \lambda {{\left( {\frac{{{2^{\frac{{{R_1}}}{{{\beta ^2}}}}} - 1}}{{\ell \left( {{d_1}} \right)}}} \right)}^{\frac{2}{\alpha }}}{\rm{B}}\left( { - \frac{2}{\alpha } + 1,\frac{2}{\alpha } + 1} \right)}}+ {R_2}{e^{ - \frac{{{\sigma ^2}}}{P}\frac{{{2^{\frac{{{R_2}}}{{1 - {\beta ^2}}}}} - 1}}{{\ell \left( {{d_2}} \right)}} - \pi \lambda {{\left( {\frac{{{2^{\frac{{{R_2}}}{{1 - {\beta ^2}}}}} - 1}}{{\ell \left( {{d_2}} \right)}}} \right)}^{\frac{2}{\alpha }}}{\rm{B}}\left( { - \frac{2}{\alpha } + 1,\frac{2}{\alpha } + 1} \right)}}\notag\\
 &= {\beta ^2}{\phi _1}\left( {\frac{{{R_1}}}{{{\beta ^2}}}} \right) + \left( {1 - {\beta ^2}} \right){\phi _2}\left( {\frac{{{R_2}}}{{1 - {\beta ^2}}}} \right),
\end{align}
\begin{equation}\label{eqn:out_tdma_k11}
{\mathcal O_{{\rm{OMA}},K,o_1}^{K=1}} = 1-{\vartheta _1}\left( \frac{R_1}{\beta^2} \right),
\end{equation}
\begin{equation}\label{eqn:out_tdma_k12}
{\mathcal O_{{\rm{OMA}},K,o_2}^{K=1}} = 1-{\vartheta _2}\left( \frac{R_2}{1-\beta^2} \right),
\end{equation}
where ${\phi _i}\left( x \right) = x{e^{ - \frac{{{\sigma ^2}}}{P}\frac{{{2^x} - 1}}{{\ell \left( {{d_i}} \right)}}{ - \pi \lambda {{\left( {\frac{{{2^x} - 1}}{{\ell \left( {{d_i}} \right)}}} \right)}^{\frac{2}{\alpha }}}{\rm{B}}\left( { - \frac{2}{\alpha } + 1,\frac{2}{\alpha } + 1} \right)}}}$ and ${\vartheta _i}\left( x \right) = {e^{ - \frac{{{\sigma ^2}}}{P}\frac{{{2^x} - 1}}{{\ell \left( {{d_i}} \right)}}{ - \pi \lambda {{\left( {\frac{{{2^x} - 1}}{{\ell \left( {{d_i}} \right)}}} \right)}^{\frac{2}{\alpha }}}{\rm{B}}\left( { - \frac{2}{\alpha } + 1,\frac{2}{\alpha } + 1} \right)}}}$. Instead of jointly optimizing $R_1$, $R_2$ and ${\beta}^2$, we introduce $z_1 = {\frac{{{R_1}}}{{{\beta ^2}}}}$, $z_2 = {\frac{{{R_2}}}{{1 - {\beta ^2}}}}$, and the optimization problem (\ref{eqn:opt_prob_simptdma}) can be reformulated by using (\ref{eqn:ltat_oma_k1}), (\ref{eqn:out_tdma_k11}) and (\ref{eqn:out_tdma_k12}) as
\begin{equation}\label{eqn:opt_prob_simp0}
\begin{array}{*{20}{cl}}
{\mathop {\rm maximize}\limits_{z_1, z_2, \beta^2} }&{{\beta ^2}{\phi _1}\left( {{z_1}} \right) + \left( {1 - {\beta ^2}} \right){\phi _2}\left( {{z_2}} \right)}\\
{{\rm{subject}}\,{\rm{to}}}&{{\vartheta _i}\left( z_i \right) \ge 1 - \varepsilon },\, i= 1,2,
\end{array}
\end{equation}
With decomposition theory \cite{palomar2006tutorial}, the optimization with respect to $z_1$ and $z_2$ can be decoupled as
\begin{equation}\label{eqn:opt_prob_simp1}
\begin{array}{*{20}{cl}}
{\mathop {\rm maximize}\limits_{z_i} }&{{\phi _i}\left( {{z_i}} \right) }\\
{{\rm{subject}}\,{\rm{to}}}&{{\vartheta _i}\left( z_i \right) \ge 1 - \varepsilon },
\end{array}
\end{equation}
Noticing that $d_1 < d_2$, it follows that ${\phi _1}({z_1}^*) > {\phi _2}({z_2}^*)$. After obtaining the optimal ${z_1}^*$ and ${z_2}^*$, it is not hard to prove that the optimal LTAT is an increasing function of $\beta^2$, i.e., ${\beta ^2}\left( {{\phi _1}\left( {{z_1}^*} \right) - {\phi _2}\left( {{z_2}^*} \right)} \right) + {\phi _2}\left( {{z_2}^*} \right)$. The maximal LTAT is achieved if and only if ${\beta^*}^2=1$. Hence, we have ${R_2} = \left( {1 - {\beta ^2}} \right){z_2}^* = 0$bps/Hz. The proof is then completed.

\bibliographystyle{ieeetran}
\bibliography{manuscript}

\end{document}